\newtheorem{theorem}{Theorem}
\newtheorem{corollary}{Corollary}
\newtheorem{lemma}{Lemma}
\newtheorem{assumption}{Assumption}
\begin{document}

\title{Strongly Convex Programming for Exact
Matrix Completion and Robust Principal Component Analysis}

\author{Hui Zhang\thanks{Department of Mathematics and Systems Science,
College of Science, National University of Defense Technology,
Changsha, Hunan, 410073, P.R.China. Corresponding author. Email: \texttt{h.zhang1984@163.com}}
\and Jian-Feng Cai\thanks{Department of Mathematics, University of Iowa, Iowa City, IA 52242, USA. Email: \texttt{jianfeng-cai@uiowa.edu}}
\and Lizhi Cheng\thanks{Department of Mathematics and Systems Science,
College of Science, National University of Defense Technology,
Changsha, Hunan, 410073, P.R.China.}
\and Jubo Zhu\thanks{Department of Mathematics and Systems Science,
College of Science, National University of Defense Technology,
Changsha, Hunan, 410073, P.R.China.}}


\date{}

\maketitle

\begin{abstract}
The common task in matrix completion (MC) and robust
principle component analysis (RPCA) is to recover a low-rank matrix
from a given data matrix. These problems gained great attention from various areas
in applied sciences recently, especially after the publication of the pioneering
works of Cand\`{e}s et al.. One fundamental result in MC and RPCA is
that nuclear norm based convex optimizations lead to the exact low-rank matrix
recovery under suitable conditions. In this paper, we extend this result by  showing that strongly convex optimizations can guarantee
the exact low-rank matrix recovery as well. The result in this paper not only
provides sufficient conditions under which the strongly convex models lead to the exact low-rank matrix recovery,
but also guides us on how to choose suitable parameters in practical algorithms.
\end{abstract}

%


\section{Introduction}\label{intro}

There is a rapidly growing interest in the recovery of an unknown
low-rank or approximately low-rank matrix from a sampling of its
entries. This problem occurs in many areas of applied
sciences ranging from machine learning \cite{aepNips2007} and control \cite{mpac1997}
to computer vision \cite{tkijcv1992}, and it has regained great attention since the publication
of the pioneering works \cite{cand1,cand4,rech1}.
Under different settings, low-rank matrix recovery problems can be
solved by nuclear norm based convex optimizations
without any loss of accuracy if the underlying matrix is indeed
of low rank \cite{cand1,cand2,cand3,cand4,rech1,rech2,rech3,rech4}. These fundamental results
have had a great impact in engineering and applied sciences.
Recently, they have shown their tremendous powers
in many practical applications such as
video restoration \cite{ji11,ji}, cognitive radio networks \cite{meng},
decomposing background topic from keywords \cite{min}, low-rank
textures capture \cite{zhan}, and 4D computed tomography \cite{gcsz}.



In order to design efficient algorithms for low-rank matrix recovery problems,
instead of directly solving the original convex optimizations,
we sometimes use their strongly convex approximations; see, e.g., \cite{cai2,wrig1}.
In this paper, we will show that these strongly convex programmings guarantee
the exact low-rank matrix recovery as well. Our result not only
provides sufficient conditions under which the strongly convex models lead to the exact low-rank matrix recovery, but also guide us on how to choose suitable parameters in practical algorithms.

\subsection{Convex programming}
In this paper, we mainly focus on two specific low-rank matrix recovery problems, namely, the low-rank matrix completion (MC) \cite{cand1,cand2,cai2} and the robust principle component analysis (RPCA) \cite{cand4,wrig1}.

Let $M\in\mathbb{R}^{n_1\times n_2}$ be an unknown low-rank matrix.
In MC, we would like to recover $M$ from
its partially known entries
$$
\{m_{ij},~~(i,j)\in\Omega\},
$$
where $\Omega\subsetneqq[n_1]\times[n_2]$ is
the set of indices of known entries. Here $[n]$ stands for the set $\{1,\ldots,n\}$.
Define $\mathcal{P}_{\Omega}$ as
\begin{equation}\label{eq2}
[\mathcal{P}_{\Omega}X]_{ij}=\left\{\begin{array}{lll} x_{ij},&
(i,j)\in \Omega,\\0,&  \mbox{otherwise}.
\end{array} \right.  \nonumber
\end{equation}
Then the sampled data in MC can be represented as $\mathcal{P}_{\Omega}M$.
The goal of MC is to recover the unknown low-rank matrix $M$ from
$\mathcal{P}_{\Omega}M$.
Since the matrix we are seeking is of low rank, a natural way to solve MC is to find the lowest
rank matrix among the feasible set
$\{X\in\mathbb{R}^{n_1\times n_2}:\mathcal{P}_{\Omega}X=\mathcal{P}_{\Omega}M\}$.
This leads to the following optimization problem
\begin{eqnarray}\label{eq3}
&\mathrm{minimize:}&~ ~\mathrm{rank}(X) \nonumber \\
&\mathrm{subject~ to:~
}&\mathcal{P}_{\Omega}X=\mathcal{P}_{\Omega}M.
\end{eqnarray}
Unfortunately, this problem is known to be NP-hard and the objective is non-convex.
A popular alternative is to relax (\ref{eq3}) to its nearest convex problem
\begin{eqnarray}\label{eq4}
&\mathrm{minimize:}&~ ~\|X\|_* \nonumber \\
&\mathrm{subject~ to:~
}&\mathcal{P}_{\Omega}X=\mathcal{P}_{\Omega}M.
\end{eqnarray}
Here $\|X\|_*$ is the summation of the singular values of $X$. Note that $\|X\|_*$ is the
best convex lower bound of the rank function on the set of matrices
whose top singular value is bounded by one \cite{faze}. It was
shown that, under suitable assumptions, the underlying low-rank matrix $M$
can be exactly recovered with high probability if $\Omega$ is uniformly randomly drawn from all
subsets of $[n_1]\times[n_2]$ with the same cardinality.

In RPCA, the sampled data matrix, denoted by
$D$, is $M$ with a small portion of its entries being corrupted. The
goal is to restore the low-rank matrix $M$ from its partially corrupted data matrix $D$.
Since only a small portion entries are corrupted, the noise matrix $D-M$ is a
sparse matrix. Therefore, $D$ consists of two components: a low-rank matrix
(the underlying matrix $M$) and a sparse matrix (the noise). Thus, the RPCA
problem can be solved by
\begin{eqnarray}\label{eq5}
&\mathrm{minimize:}&~ ~\mathrm{rank}(L)+ \lambda \|S\|_0 \nonumber \\
&\mathrm{subject~ to:~ }&D=L+S,
\end{eqnarray}
where $\|S\|_0$ stands for the number of nonzero entries of $S$.
This problem is NP-hard and non-convex. Again, one can exploit the convex relaxation
technique used in the MC problem. Notice that the convex lower bound
of the zero-norm function $\|\cdot\|_0$ in the infinity-norm unit ball
is the 1-norm $\|\cdot\|_1$ (the summation of absolute values).
Therefore, the convex relaxation of \eqref{eq5} is \cite{cand4,chan}
\begin{eqnarray}\label{eq6}
&\mathrm{minimize:}&~ ~\|L\|_*+\lambda \|S\|_1 \nonumber \\
&\mathrm{subject~ to:~ }&D=L+S.
\end{eqnarray}
where $\lambda$ is a  parameter balancing the low-rank and sparse
components. It was shown  in \cite{cand4,chan} that the convex optimization can recover the low-rank
matrix $M$ exactly under suitable assumptions.

\subsection{Strongly convex programming}
There exist many efficient algorithms
to solve the convex optimization problems \eqref{eq4} and \eqref{eq6} even when the scale of the problems is
large up to $10^5\times 10^5$, including \cite{cai1,liu,ma,toh,hui1}
for (\ref{eq4}) and \cite{cand4,lin1,lin2,wrig1} for (\ref{eq6}). However, instead of solving the convex minimizations \eqref{eq4} and \eqref{eq6} directly, some of the aforementioned algorithms solve their approximations involving
\emph{strongly convex} objectives.
In particular, the singular value thresholding (SVT) algorithm \cite{cai1} in
MC uses Uzawa's algorithm to solve
\begin{eqnarray}\label{eq7}
&\mathrm{minimize:}&~\|X\|_*+\frac{1}{2\tau}\|X\|^2_F \nonumber \\
&\mathrm{subject~ to:~
}&\mathcal{P}_{\Omega}X=\mathcal{P}_{\Omega}M,
\end{eqnarray}
and the iterative thresholding(IT) algorithm
\cite{wrig1} in RPCA solves\footnote{The paper \cite{wrig1} contains a critical error on the theoretical analysis, which had been discovered and removed by Emmanel Cand\`{e}s of Standford. Fortunately, the corrected version does not change the original convex model and has no effect on its iterative thresholding algorithm. See \url{http://books.nips.cc/papers/files/nips22/NIPS2009_0116_correction.pdf}.}
\begin{eqnarray}\label{eq8}
&\mathrm{minimize:}&~\|L\|_*+\frac{1}{2\tau}\|L\|^2_F+\lambda\|S\|_1+\frac{1}{2\tau}\|S\|^2_F \nonumber \\
&\mathrm{subject~ to:~ }&D=L+S.
\end{eqnarray}
where $\tau$ is some positive penalty parameter.
One of the main advantages of using  strongly convex programmings is that a broader range of existing optimization methods in the literature can be applied to the MC and RPCA problems. For example, it is well known that the convex conjugate of a strongly convex function is differentiable \cite{convex}. Therefore, the Lagrange dual of the strongly convex programmings \eqref{eq7} and \eqref{eq8} are differentiable, and hence smooth convex optimization methods can be applied to the dual. In fact, the SVT algorithm \cite{cai1} and the IT algorithm \cite{wrig1} are the gradient algorithms applied to the Lagrange dual of \eqref{eq7} and \eqref{eq8} respectively (a.k.a. Uzawa's algorithm). Additionally, one can exploit Nestrov's optimal scheme for smooth optimization \cite{nesterov} and even quasi-Newton method (e.g. L-BFGS \cite{yin1}), just to name a few.

However, the convex conjugate of $\|X\|_*$ or $\|X\|_1$ is not differentiable so that the Lagrange dual of \eqref{eq4} or \eqref{eq6} is not smooth. That gives us a part of reason why some algorithms exploit strongly convex approximations to solve \eqref{eq4} and \eqref{eq6}.
On the other hand, it follows from \cite{cai1,wrig1} and standard convex optimization theory
that, when $\tau$ tends to infinity, the strongly convex optimization
\eqref{eq7} and \eqref{eq8} become \eqref{eq4} and \eqref{eq6} respectively.
Therefore, in order to get the exact low-rank matrix recovery, we have to choose
an infinite $\tau$ in the SVT and IT algorithms, which is impractical.
Fortunately, it has been observed empirically in \cite{cai1,wrig1} that a finite $\tau$ is
enough for the purpose of the exact low-rank matrix recovery.
So a natural theoretical question is whether \eqref{eq7} and \eqref{eq8} with a finite $\tau$
lead to the exact low-rank matrix recovery in MC and RPCA.
Futhermore, the empirical convergence speed of the SVT and IT algorithms
depend on $\tau$. The smaller $\tau$ is, the faster the algorithm converges.
Therefore, it is  interesting to find a lower bound of $\tau$ for the exact low-rank matrix recovery.
These two questions are answered positively in this paper.

The related literature can be traced back to the linearized Bregman iteration
(LBI) algorithm \cite{cai2,cai3,cai4,oshe,yin1,yin2}, which can approximately
but efficiently solve the basis pursuit problem \cite{chen}
$\min_x\{\|x\|_1: Ax=b\}$ in compressed sensing
\cite{cand5,cand6,dono}. An interesting phenomenon about the LBI
algorithm was discovered in \cite{cai2,cai3,yin1}: the LBI algorithm converges to
a strongly convex optimization $\min_x\{\|x\|_1+\frac{1}{2\tau}\|x\|^2: Ax=b\}$
whose solution is the same as the basis pursuit problem when the parameter $\tau$ beyond
a finite scaler under some suitable conditions. In other words, under some suitable
conditions, a
convex programming is  equivalent to a strongly
convex programming when $\tau$ is large enough. In \cite{hui2}, this idea is extended to
the MC problem. Unfortunately, the bound  is very rough, and the method used
there has many limitations and can not be extended to the RPCA problem.
In \cite{frie}, a generic exact
regularization result is given, but the conditions are not easy to check for both MC and RPCA problems.

\subsection{Contributions and organization}
In the paper, we prove that, if $\tau$ exceed some value, then under some suitable conditions strongly convex programming \eqref{eq7}
and \eqref{eq8} can recover low-rank matrices exactly in MC and RPCA problems
respectively.  The explicit lower bounds of
$\tau$ are also given, and the lower bounds greatly improve the result in \cite{hui2}.

The significance of the paper is two-folded.
Firstly, sufficient conditions under which the strongly convex programming  leads to exact low-rank matrix recovery are derived. This, in turn,
allows us to exploit a broader range of existing optimization method in the literature.
Secondly, when we prefer to
minimizing the strongly convex objectives for designing fast algorithms,
we only need to set a finite parameter $\tau$ beyond some value
determined by the given data matrix, which lead to faster  convergence of the algorithms compared with the one using $\tau$ close to infinity.
In other words, it provides some
guidance on how to choose suitable parameters in practical algorithms.

The remainder of the paper is organized as follows. In Section 2, we
provide a brief summary of preliminaries including notations,
assumptions, and some existing results.
In Section 3, we show that  strongly convex programming \eqref{eq7}
and \eqref{eq8} lead to the exact low rank matrix recovery, and we give
explicit expressions of the lower bounds of $\tau$.
 Conclusion and further works are discussed in Section 4.

\section{Preliminaries} \label{sec:1}
In this section, we give some notations and existing results that will
be used later in this paper.

\subsection{Notations}\label{sec:2}
Let $X,Y\in\mathbb{R}^{n_1\times n_2}$ be two matrices.
The Euclidean inner product in
$\mathbb{R}^{n_1\times n_2}$ is defined as $\langle X,Y \rangle:=\mathrm{trace}(X^*Y)$,
where $X^*$ stands for the transpose $X$.
Then the Euclidean norm of $X$, denoted by $\|X\|_F$, is $\|X\|_F:=\sqrt{\langle X,X \rangle}$,
and it is also known as the Frobenius norm.
We denote the $i$-th nonzero singular value of matrix $X$ by
$\sigma_i(X)$, where $1\leq i\leq \mathrm{rank}(X)$.

The Forbenius norm also equals to the Euclidean norm of the vector of singular
values, i.e., $\|X\|_F=\left(\sum_{i}\sigma_i^2(X)\right)^{1/2}$. The
spectral norm of $X$, denoted by $\|X\|$, is $\|X\|:=\sigma_1(X)$, the largest
singular value of $X$. The nuclear norm of $X$
is the summation of its singular values, i.e.
$\|X\|_*:=\sum_{i}\sigma_i(X)$. The maximum entry of $X$ (in
absolute value) is denoted by $\|X\|_\infty :=\max_{i,j}|X_{ij}|$.
The $l_1$ norm of a matrix viewed as a long vector is denoted by
$\|X\|_1:=\sum_{i,j} |X_{ij}|$.

It can be easily verified that
\begin{eqnarray}\label{eq9}
\left\{\begin{array}{ll}
\|X\|_\infty \leq\|X\|_F\leq \|X\|_1 \leq n_1n_2\|X\|_\infty, \\
\|X\|\leq \|X\|_F\leq \|X\|_* \leq \sqrt{\mathrm{rank}(X)}\cdot\|X\|_F \leq \mathrm{rank}(X)\cdot\|X\|.
\end{array} \right.
\end{eqnarray}
The dual relationship is also important in our derivation. More precisely, we
have
\begin{eqnarray}\label{eq10}
\left\{\begin{array}{lll} &\|X\|=\sup_{\|Y\|_*\leq 1}\langle X,Y
\rangle, &\|X\|_*=\sup_{\|Y\|\leq 1}\langle X,Y
\rangle,\\
&\|X\|_1=\sup_{\|Y\|_\infty\leq 1}\langle X,Y \rangle,&
\|X\|_\infty=\sup_{\|Y\|_1\leq 1}\langle X,Y \rangle.
\end{array} \right.
\end{eqnarray}
It follows directly that $|\langle X,Y
\rangle|\leq \|X\|\cdot\|Y\|_*$ and $|\langle X,Y \rangle|\leq
\|X\|_1\cdot\|Y\|_\infty$.

We denote $M\in\mathbb{R}^{n_1\times n_2}$ to be the unknown low-rank
matrix that is to be recovered in the MC and RPCA problems. Recall
that the available data in the MC problem is $\mathcal{P}_{\Omega}M$, where
$\Omega$ is the index set of known entries. In the RPCA problem,
the available data is $D$. We will denote the sparse component, or the noise,
by $S_0=D-M$. Without ambiguity, $\Omega$ is used as the support of
$S_0$ in the RPCA problem.

Let $n=\max\{n_1,n_2\}$.
We will always use $r:=\mathrm{rank}(M)$, and denote the singular value decomposition of
$M$ by
$$
M=U\Sigma V^*=\sum_{i=1}^r\sigma_iu_iv_i^*,
$$
where
$\sigma_1,\dots, \sigma_r$ are the positive singular
values of $M$, and $U=[u_1,\dots, u_r]$ and $V=[v_1,\dots, v_r]$ are the
matrices of the left- and right-singular vectors.

Let $\mathrm{T}$ be the linear subspace of $\mathbb{R}^{n_1\times n_2}$
\begin{equation}\label{eq14}
\mathrm{T}:=\{UX_1^*+X_2V^*, X_1\in\mathbb{R}^{n_2\times r},
X_2\in\mathbb{R}^{n_1\times r}\},
\end{equation}
and $\mathrm{T}^\perp$ be its orthogonal completion. Define
$\mathcal{P}_{\mathrm{T}}$ and $\mathcal{P}_{\mathrm{T}^\perp}$ be the
orthogonal projections onto $\mathrm{T}$ and $\mathrm{T}^\perp$
respectively. Let $\mathcal{P}_{U}$ and $\mathcal{P}_{V}$
represent the projection operators onto the spaces spanned by the
columns of $U$ and $V$ respectively, i.e. $\mathcal{P}_{U}=UU^*$ and
$\mathcal{P}_{V}=VV^*$. The operator $\mathcal{P}_{\mathrm{T}}$
can be written as
\begin{equation}\label{eq15}
\mathcal{P}_{\mathrm{T}}Y=\mathcal{P}_UY+Y\mathcal{P}_V-\mathcal{P}_UY\mathcal{P}_V.
\end{equation}
To save notations,  $\Omega$ also denotes the linear space of the matrices supported on
$\Omega$, so that the projection onto $\Omega$ is $\mathcal{P}_{\Omega}$ in \eqref{eq2}.

It is well known that any subgradient \cite{convex} of the nuclear norm
function at $M$ is of the form
\begin{equation}\label{eq16}
UV^*+W,
\end{equation}
where $W\in\mathbb{R}^{n_1\times n_2}$ is a matrix satisfying
\begin{equation}\label{eq17}
\mathcal{P}_{\mathrm{T}}W=0, ~~~  \|W\|\leq 1.
\end{equation}
The subgradient of the $l_1$ norm function at $S_0$ is
\begin{equation}\label{eq18}
\mathrm{sgn}(S_0)+F,
\end{equation}
where $F$ satisfies $\mathcal{P}_{\Omega}F=0$ and $\|F\|_\infty \leq 1$,
and $\mathrm{sgn}(S_0)$ is the matrix whose entries are the signs of those of $S_0$.

Finally, we shall also manipulate some linear operators which act on
the space $\mathbb{R}^{n_1\times n_2}$. We will use calligraphic
letters for these operators as in $\mathcal{A}(X)$. In particular,
$\mathcal{I}:\mathbb{R}^{n_1\times n_2}\rightarrow
\mathbb{R}^{n_1\times n_2}$ denotes the identity operator and
$\mathcal{I}\preceq\mathcal{A}$ means that $\mathcal{A}-\mathcal{I}$
is symmetric positive semidefinite, and of course
$\mathcal{I}\prec\mathcal{A}$ means that $\mathcal{A}-\mathcal{I}$
is symmetric positive definite. We introduce the operator norm,
denoted by $\|\mathcal{A}\|$ and defined as $\|\mathcal{A}\|=\sup
_{\{\|X\|_F\leq1\}}\|\mathcal{A}X\|_F$.

\subsection{Existing results}
In this subsection, we present some existing results for exact low-rank
matrix recovery via convex optimization. For this, we need some assumptions
on the unknown low-rank matrix $M$ and the set $\Omega$.

Let us first give assumptions on $M$.
It is observed in \cite{cand1,cand3} that, for the MC problem,
it is impossible to recover a matrix which is equal to
zero in nearly all of its entries unless all of the matrix entries are observed. For the RPCA problem,
the authors in \cite{cand4,chan} observed if the low-rank
component is also sparse, then it is impossible to decide whether it
is low-rank or sparse. These observations lead to the introduction of the
incoherence assumptions that will be introduced in the following.
There are some other conditions to characterize the low-rank sparse decomposition,
e.g., the uncertainty principle and
rank-sparsity incoherence in \cite{chan}.

\begin{assumption}[Incoherence Assumption]
We say that the matrix $M$ obeys the incoherent assumption with parameter $\mu$,
if the following statements are valid:
\begin{equation}\label{eq11}
\max_i\|U^*e_i^{(1)}\|^2\leq \frac{\mu r}{n_1},~~~\max_i\|V^*e_i^{(2)}\|^2\leq
\frac{\mu r}{n_2}
\end{equation}
and
\begin{equation}\label{eq12}
\|UV^*\|_\infty \leq \sqrt{\frac{\mu r}{n_1n_2}}.
\end{equation}
where $e_i^{(1)}$ and $e_i^{(2)}$ are the $i$\textit{-th} vector in the canonical basis of
$\mathbb{R}^{n_1}$ and $\mathbb{R}^{n_2}$ respectively.
\end{assumption}

A stronger assumption, namely the strong incoherent condition, can also
guarantee exact matrix completion via convex programming \cite{cand2}.
It has been shown that the strong assumption implies the weaker one in
\cite{hui3}, but for the easy presentation of the existing results, we shall
introduce both of them.

\begin{assumption}[Strong Incoherence Assumption]
We say that the matrix $M$ obeys the strong incoherent assumption
with parameter $\mu_1$, if, for any $\mu\leq\mu_1$,
the inequality (\ref{eq12}) holds true and the following
statements are valid: for all pairs $(a,a^{'})\in[n_1]\times [n_1]$
and $(b,b^{'})\in[n_2]\times [n_2]$, it holds that
\begin{eqnarray}\label{eq13}
\left\{\begin{array}{lll} |\langle e_a^{(1)},UU^*e_{a^{'}}^{(1)}\rangle
-\frac{r}{n_1}1_{a=a^{'}}|\leq
\mu\frac{\sqrt{r}}{n_1};\\
|\langle e_b^{(2)},VV^*e_{b^{'}}^{(2)}\rangle -\frac{r}{n_2}1_{b=b^{'}}|\leq
\mu\frac{\sqrt{r}}{n_2}.
\end{array} \right.
\end{eqnarray}
\end{assumption}

Next we move to the assumptions on $\Omega$, the sampling pattern in
the MC problem and the sparse pattern of the sparse matrix in RPCA.
The theorems stated in \cite{cand1,cand4} always assume that $\Omega$
is uniformly randomly drawn from the set
$\{ \Lambda : |\Lambda|=m,\Lambda\subset[n_1]\times[n_2]\}$,
where $|\cdot|$ stands for the cardinality and $m=|\Omega|$.
However, as pointed out in \cite{cand4}, it is a little more
convenient to work with the Bernoulli model
$\Omega=\{(i,j):\delta_{i,j}=1\}$, where the $\delta_{i,j}$'s are
i.i.d. Bernoulli variable taking value one with probability $\rho$
and zero with probability $1-\rho$. Because our proofs will
heavily rely on the dual certificates constructed under the
Bernoulli model, we choose this proxy random model to the uniform
sampling, denoted by $\Omega\sim Ber(\rho)$, in our paper for
simplicity. We would like to point out that the convenience brought
by the Bernoulli model does not weaken the formal theorems; for more
details see \cite{cand4}.

Now we are ready to present two main results for exact matrix completion
and robust principle component analysis via convex optimization.

\begin{theorem}[{\cite[Theorem 1.2]{cand2}}]\label{the1}
Let $M\in \mathbb{R}^{n_1\times n_2}$ be a fixed matrix obeying the strong incoherence assumption with parameter
$\mu$. Let $n:=\max (n_1,n_2)$. Suppose we observe $m$ entries of
$M$ with locations sampled uniformly at random and there is a
positive numerical constant $C$ such that
\begin{equation}\label{eq19}
m\geq C\mu^2 nr\log^6n,
\end{equation}
Then $M$ is the unique solution to convex programming (\ref{eq4}) with
probability at least $1-n^{-3}$.
\end{theorem}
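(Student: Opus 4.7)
The plan is to produce $M$ as the \emph{unique} minimizer of \eqref{eq4} via the standard dual-certificate argument of Cand\`es--Tao. Using the subgradient formula \eqref{eq16}--\eqref{eq17}, the KKT conditions for \eqref{eq4} reduce to exhibiting a single matrix $Y\in\mathbb{R}^{n_1\times n_2}$ satisfying
$$
\mathcal{P}_{\Omega}Y=Y,\qquad \mathcal{P}_{\mathrm{T}}Y=UV^{*},\qquad \|\mathcal{P}_{\mathrm{T}^{\perp}}Y\|<1,
$$
together with injectivity of $\mathcal{P}_{\Omega}$ restricted to $\mathrm{T}$; the first three items give optimality and the last upgrades it to uniqueness. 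I would then pass to the Bernoulli proxy $\Omega\sim\mathrm{Ber}(\rho)$ with $\rho\asymp \mu^{2}r\log^{6}n/n$ and aim for each ingredient to hold with probability $\geq 1-n^{-3}$.

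Second, I would establish the near-isometry on $\mathrm{T}$,
$$
\bigl\|\rho^{-1}\mathcal{P}_{\mathrm{T}}\mathcal{P}_{\Omega}\mathcal{P}_{\mathrm{T}}-\mathcal{P}_{\mathrm{T}}\bigr\|\leq \tfrac{1}{2},
$$
by an operator Bernstein inequality applied to the sum of independent rank-one operators $(i,j)\mapsto\mathcal{P}_{\mathrm{T}}(e_{i}^{(1)}e_{j}^{(2)\,*})\langle e_{i}^{(1)}e_{j}^{(2)\,*},\cdot\rangle$. The variance and uniform bounds come directly from the incoherence hypothesis \eqref{eq11} via $\|\mathcal{P}_{\mathrm{T}}(e_{i}^{(1)}e_{j}^{(2)\,*})\|_{F}^{2}\leq 2\mu r/\min(n_{1},n_{2})$. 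This same event inverts $\mathcal{P}_{\mathrm{T}}\mathcal{P}_{\Omega}\mathcal{P}_{\mathrm{T}}$ on $\mathrm{T}$ and yields the required injectivity of $\mathcal{P}_{\Omega}|_{\mathrm{T}}$.

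Third, I would take the certificate
$$
Y := \rho^{-1}\mathcal{P}_{\Omega}\mathcal{P}_{\mathrm{T}}\bigl(\rho^{-1}\mathcal{P}_{\mathrm{T}}\mathcal{P}_{\Omega}\mathcal{P}_{\mathrm{T}}\bigr)^{-1}UV^{*}=\rho^{-1}\mathcal{P}_{\Omega}\sum_{k\geq 0}\bigl(\mathcal{P}_{\mathrm{T}}-\rho^{-1}\mathcal{P}_{\mathrm{T}}\mathcal{P}_{\Omega}\mathcal{P}_{\mathrm{T}}\bigr)^{k}UV^{*},
$$
so that $\mathcal{P}_{\Omega}Y=Y$ and $\mathcal{P}_{\mathrm{T}}Y=UV^{*}$ automatically. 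Since $\mathcal{P}_{\mathrm{T}^{\perp}}Z=0$ for $Z\in\mathrm{T}$, one rewrites
$$
\mathcal{P}_{\mathrm{T}^{\perp}}Y=\mathcal{P}_{\mathrm{T}^{\perp}}\bigl(\rho^{-1}\mathcal{P}_{\Omega}-\mathcal{I}\bigr)\sum_{k\geq 0}\bigl(\mathcal{P}_{\mathrm{T}}-\rho^{-1}\mathcal{P}_{\mathrm{T}}\mathcal{P}_{\Omega}\mathcal{P}_{\mathrm{T}}\bigr)^{k}UV^{*},
$$
and bounds each summand by a high-order matrix moment estimate. The strong incoherence \eqref{eq13} together with \eqref{eq12} is the critical input: it sharpens the $L^{\infty}$ control on the intermediate matrices $(\mathcal{P}_{\mathrm{T}}-\rho^{-1}\mathcal{P}_{\mathrm{T}}\mathcal{P}_{\Omega}\mathcal{P}_{\mathrm{T}})^{k}UV^{*}$ from what \eqref{eq11} alone would give, and feeds a geometric summation governed by the Step~2 contraction factor to produce $\|\mathcal{P}_{\mathrm{T}^{\perp}}Y\|<1$.

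The principal obstacle is the moment computation itself: bounding $\mathbb{E}\,\|\cdot\|^{p}$ of the random operator inside the sum, uniformly in $k$ and up to order $p\asymp\log n$, is a delicate combinatorial calculation over trees of index pairs generated by the repeated action of $\mathcal{P}_{\mathrm{T}}\mathcal{P}_{\Omega}\mathcal{P}_{\mathrm{T}}$. Each additional moment order costs an extra logarithm, and this is precisely the source of the $\log^{6}n$ factor in \eqref{eq19}. Managing the interplay between \eqref{eq12}, which controls the $L^{\infty}$ norm, and \eqref{eq13}, which controls bilinear inner products $\langle e_{a}^{(1)},UU^{*}e_{a'}^{(1)}\rangle$ arising when one expands products $\mathcal{P}_{\mathrm{T}}\mathcal{P}_{\Omega}\mathcal{P}_{\mathrm{T}}$ entry-by-entry, is the most demanding piece of bookkeeping in the argument.
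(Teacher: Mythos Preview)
The paper does not prove this theorem. Theorem~\ref{the1} is quoted as an \emph{existing result} from Cand\`es--Tao \cite{cand2} in the preliminaries (Section~2.2), with no proof supplied; the paper only uses it as background and, via Lemma~\ref{lem1}, borrows the dual certificate $\Lambda=\mathcal{P}_{\Omega}\mathcal{P}_{\mathrm{T}}(\mathcal{P}_{\mathrm{T}}\mathcal{P}_{\Omega}\mathcal{P}_{\mathrm{T}})^{-1}UV^{*}$ and the near-isometry $\frac{p}{2}\mathcal{I}\preceq\mathcal{P}_{\mathrm{T}}\mathcal{P}_{\Omega}\mathcal{P}_{\mathrm{T}}\preceq\frac{3p}{2}\mathcal{I}$ from that reference as black boxes. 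So there is no in-paper proof to compare your proposal against.

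That said, your sketch is a faithful outline of the original Cand\`es--Tao argument: the dual-certificate optimality criterion, the near-isometry on $\mathrm{T}$ via operator Bernstein, the Neumann-series construction of $Y$ (which, after cancelling the $\rho^{-1}$ factors, is exactly the $\Lambda$ the paper later cites), and the high-moment combinatorial bound for $\|\mathcal{P}_{\mathrm{T}^{\perp}}Y\|$ driven by the strong incoherence conditions \eqref{eq12}--\eqref{eq13}. You also correctly identify the moment calculation as the technical heart and the source of the $\log^{6}n$ exponent. Nothing in your plan is wrong; it simply reproduces a proof the present paper deliberately outsources.
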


\begin{theorem}[{\cite[Theorem 1.1]{cand4}}]\label{the2}
Suppose $M\in \mathbb{R}^{n_1\times n_2}$ obeys the incoherence
assumption with parameter $\mu$, and that the support set of $S_0$
is uniformly distributed among all sets of cardinality $m$. Write
$n_{(1)}:=\max (n_1,n_2)$ and $n_{(2)}:=\min (n_1,n_2)$. Provided
that
\begin{equation}\label{eq20}
\mathrm{rank}(M)\leq \rho_r n_{(2)}\mu^{-1}(\log n_{(1)})^{-2},~~and ~~m\leq
\rho_sn_1n_2.
\end{equation}
where $\rho_r$ and $\rho_s$ are positive numerical constants. Then
there is a positive numerical constant $c$ with probability at least
$1-cn_{(1)}^{-10}$ such that $(M,S_0)$ is the unique solution to convex
programming (\ref{eq6}) with $\lambda=1/\sqrt{n_{(1)}}$.
\end{theorem}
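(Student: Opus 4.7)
The plan is to reduce exact recovery to the existence of a dual certificate. By convex duality applied to \eqref{eq6} and the subgradient formulas \eqref{eq16}--\eqref{eq18}, the pair $(M,S_0)$ is the unique optimum of \eqref{eq6} provided one can exhibit $W\in\mathbb{R}^{n_1\times n_2}$ lying simultaneously in the subdifferential of $\|\cdot\|_*$ at $M$ and of $\lambda\|\cdot\|_1$ at $S_0$, with \emph{strict} inequalities in the dual-norm bounds, together with the injectivity condition $\mathrm{T}\cap\Omega=\{0\}$ (viewing $\Omega$ as the subspace of matrices supported on the index set). Concretely, one wants $\mathcal{P}_{\mathrm{T}}W=UV^{*}$, $\|\mathcal{P}_{\mathrm{T}^\perp}W\|<1$, $\mathcal{P}_{\Omega}W=\lambda\,\mathrm{sgn}(S_0)$, and $\|\mathcal{P}_{\Omega^c}W\|_\infty<\lambda$.

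Next I would establish the injectivity side. Using the Bernoulli proxy $\Omega\sim\mathrm{Ber}(\rho)$ with $\rho\leq\rho_s$, the incoherence bound \eqref{eq11} together with a matrix Bernstein inequality for sums of independent rank-one operators (as in \cite{cand1,cand2}) yields $\|\mathcal{P}_{\mathrm{T}}\mathcal{P}_{\Omega}\mathcal{P}_{\mathrm{T}}-\rho\,\mathcal{P}_{\mathrm{T}}\|\leq\varepsilon$ with high probability, provided $\mathrm{rank}(M)\leq\rho_r n_{(2)}\mu^{-1}(\log n_{(1)})^{-2}$. Since $\rho<1$ this forces $\mathrm{T}\cap\Omega=\{0\}$ and, more importantly, invertibility of $\mathcal{I}-\mathcal{P}_{\Omega}\mathcal{P}_{\mathrm{T}}\mathcal{P}_{\Omega}$ on matrices supported on $\Omega$, which is what makes the certificate construction below well defined.

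The bulk of the argument is the explicit construction of $W$. I would split $W=W^{L}+W^{S}$ and treat the two parts separately. For the sparse side, define $W^{S}:=\lambda(\mathcal{I}-\mathcal{P}_{\Omega}\mathcal{P}_{\mathrm{T}}\mathcal{P}_{\Omega})^{-1}\mathrm{sgn}(S_0)$ restricted to $\Omega$, which enforces $\mathcal{P}_{\Omega}W^{S}=\lambda\,\mathrm{sgn}(S_0)$ by construction. For the low-rank side, build $W^{L}$ supported on $\Omega^c$ by Gross's \emph{golfing scheme}: partition $\Omega^c$ into $k_0=\Theta(\log n_{(1)})$ independent Bernoulli subsets and iterate a contraction that, at each step, shrinks the residual $UV^{*}-\mathcal{P}_{\mathrm{T}}(W^{S}+Y_{j-1})$ in Frobenius norm by a fixed factor strictly less than one (exactly the contraction guaranteed by the previous concentration estimate). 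After $\Theta(\log n_{(1)})$ iterations the residual falls below $1/n^2$, so that $W=W^{S}+W^{L}$ satisfies $\mathcal{P}_{\mathrm{T}}W=UV^{*}$ up to a negligible correction absorbable by the strict slack.

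The main obstacle, occupying most of \cite{cand4}, is verifying the two remaining strict inequalities $\|\mathcal{P}_{\mathrm{T}^\perp}W\|<1/2$ and $\|\mathcal{P}_{\Omega^c}W\|_\infty<\lambda/2$. Each requires a chain of matrix Bernstein and Hoeffding-type estimates, using the incoherence bound \eqref{eq11} to control operator norms and the infinity-norm bound \eqref{eq12} to control entrywise norms. One further exploits that the signs of $S_0$ may be taken i.i.d.\ Rademacher (a standard derandomization from \cite{cand4}) so that $\|W^{S}\|$ can be estimated through a Hoeffding inequality for sums of independent random matrices. The scaling $\lambda=1/\sqrt{n_{(1)}}$ is dictated precisely by the window in which the spectral-norm control of $W^{S}$ and the $\infty$-norm control of $W^{L}$ become simultaneously feasible, and the logarithmic iteration count in the golfing scheme is what generates the $(\log n_{(1)})^{-2}$ factor in the rank hypothesis \eqref{eq20}.
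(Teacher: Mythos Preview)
The paper does not prove this statement. Theorem~\ref{the2} is quoted verbatim from \cite{cand4} in the preliminaries section as an existing result, with no proof given; the paper's own contributions (Theorems~\ref{the5} and~\ref{the6}) take the dual certificates $W^L,W^S$ produced in the proof of \cite[Theorem~1.1]{cand4} as a black box via Lemmas~\ref{lem2} and~\ref{lem3}. So there is nothing in the paper to compare your proposal against.

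That said, your outline is a fair high-level summary of how \cite{cand4} actually proceeds, with one substantive inaccuracy worth flagging. You frame the certificate as requiring the \emph{exact} condition $\mathcal{P}_{\mathrm{T}}W=UV^*$ (plus strict dual-norm bounds and $\mathrm{T}\cap\Omega=\{0\}$), and then say the golfing residual is ``absorbable by the strict slack.'' In fact the key technical device in \cite{cand4}---and the one this paper explicitly relies on in Theorem~\ref{the5} and Lemma~\ref{lem2}(b)---is a \emph{relaxed} certificate: one only needs $\|\mathcal{P}_\Omega(UV^*+W^L-\lambda\,\mathrm{sgn}(S_0))\|_F$ to be small (the $\mathcal{P}_\Omega B$ term in Theorem~\ref{the5}), paired with the quantitative injectivity $\|\mathcal{P}_\Omega\mathcal{P}_{\mathrm{T}}\|\le 1/2$ rather than mere transversality. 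This relaxation is what makes the golfing scheme work with only $O(\log n)$ rounds; without it one would need the residual driven to zero, which the construction does not achieve. Also, $W^L$ is not supported on $\Omega^c$; it is of the form $\mathcal{P}_{\mathrm{T}^\perp}Y$ where $Y$ is built from samples in $\Omega^c$ (see Lemma~\ref{lem2}).
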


\section{Main results}
In this section, we present the main results of this paper. More precisely,
we will show that, if $\tau$ is large enough, then the low rank matrix
$M$ can be exactly recovered by the strongly convex programming (\ref{eq7}) for
the MC problem and (\ref{eq8}) for the RPCA problem. A computable and explicit
lower bound of $\tau$ will  be given.
The results are proved in terms of dual
certificates (also known as Lagrangian multiplier) in combination with
a relaxation method introduced in \cite{gros} and studied
in \cite{cand4}.

\subsection{Matrix completion}
In this section, we discuss the strongly convex programming for the matrix
completion problem. Recall that, in the MC problem, we would like to recover
an unknown low-rank matrix $M$ from the sampling of its partial entries
$\mathcal{P}_{\Omega}M$. We will show that $M$ is the unique solution
of the strongly convex minimization \eqref{eq7} with
dominant probability, provided that $\tau$ exceeds than a specified finite number.
We will also give an explicit lower bound of $\tau$ using only the sampling
$\mathcal{P}_\Omega M$.

We first give a sufficient condition for $M$ being the unique solution of (\ref{eq7}).
It follows from the standard convex optimization theory and \cite{cand2}.
\begin{theorem}\label{the3}
Assume that there is a matrix $Y$ that satisfies

(a) $Y =\mathcal{P}_\Omega Y$,

(b) $\mathcal{P}_\mathrm{T}Y=\frac{1}{\tau}M + UV^*$,

(c) $\|\mathcal{P}_{\mathrm{T}^\perp}Y\|\leq 1$.

Then $M$ is the unique solution of the strongly convex programming
(\ref{eq7}).
\end{theorem}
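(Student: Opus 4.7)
The plan is to verify that the three conditions on $Y$ let us play the usual dual-certificate game, while the extra Frobenius-norm term supplies the strong convexity that delivers uniqueness for free. Concretely, I would take an arbitrary feasible perturbation $M+H$ (so $\mathcal{P}_\Omega H=0$) and lower-bound the objective
\[
f(X):=\|X\|_*+\tfrac{1}{2\tau}\|X\|_F^2
\]
at $M+H$ in terms of $f(M)$ plus a nonnegative quantity that is strictly positive whenever $H\neq 0$.

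First I would handle the Frobenius term by its exact quadratic expansion
\[
\tfrac{1}{2\tau}\|M+H\|_F^2=\tfrac{1}{2\tau}\|M\|_F^2+\tfrac{1}{\tau}\langle M,H\rangle+\tfrac{1}{2\tau}\|H\|_F^2 .
\]
For the nuclear term I would invoke the subgradient characterization (\ref{eq16})--(\ref{eq17}): by duality one may choose $W\in\mathrm{T}^\perp$ with $\|W\|\le 1$ such that $\langle W,H\rangle=\|\mathcal{P}_{\mathrm{T}^\perp}H\|_*$ (take $W=\tilde U\tilde V^*$ where $\tilde U\tilde\Sigma\tilde V^*$ is an SVD of $\mathcal{P}_{\mathrm{T}^\perp}H$; the inclusion in $\mathrm{T}^\perp$ is automatic). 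This yields
\[
\|M+H\|_*\ge \|M\|_*+\langle UV^*,H\rangle+\|\mathcal{P}_{\mathrm{T}^\perp}H\|_*.
\]

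Next I would use the hypotheses on $Y$ to eliminate the linear term $\langle UV^*+\tfrac{1}{\tau}M,H\rangle$. By (b) this equals $\langle \mathcal{P}_{\mathrm{T}}Y,H\rangle=\langle Y,H\rangle-\langle \mathcal{P}_{\mathrm{T}^\perp}Y,\mathcal{P}_{\mathrm{T}^\perp}H\rangle$, and (a) together with $\mathcal{P}_\Omega H=0$ forces $\langle Y,H\rangle=\langle\mathcal{P}_\Omega Y,H\rangle=\langle Y,\mathcal{P}_\Omega H\rangle=0$. Applying the duality inequality $|\langle \mathcal{P}_{\mathrm{T}^\perp}Y,\mathcal{P}_{\mathrm{T}^\perp}H\rangle|\le\|\mathcal{P}_{\mathrm{T}^\perp}Y\|\cdot\|\mathcal{P}_{\mathrm{T}^\perp}H\|_*$ together with (c) gives
\[
\langle UV^*+\tfrac{1}{\tau}M,H\rangle\ge -\|\mathcal{P}_{\mathrm{T}^\perp}H\|_*.
\]
Putting everything together,
\[
f(M+H)\ge f(M)+\bigl(1-\|\mathcal{P}_{\mathrm{T}^\perp}Y\|\bigr)\|\mathcal{P}_{\mathrm{T}^\perp}H\|_*+\tfrac{1}{2\tau}\|H\|_F^2 .
\]

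Both extra terms are nonnegative by (c), and the crucial point is that the Frobenius contribution $\tfrac{1}{2\tau}\|H\|_F^2$ is strictly positive for every $H\neq 0$. Hence $f(M+H)>f(M)$ whenever $H\neq 0$ is a feasible perturbation, proving that $M$ is the unique solution of (\ref{eq7}). I do not expect any real obstacle here: the delicate part of this theory (constructing the certificate $Y$) is postponed to subsequent sections, and the only mild subtlety in the present argument is checking that the choice of $W$ indeed lies in $\mathrm{T}^\perp$, which follows immediately because $\mathcal{P}_{\mathrm{T}^\perp}H$ has column and row spaces orthogonal to those of $U$ and $V$.
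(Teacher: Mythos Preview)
Your proof is correct. The paper proves Theorem~\ref{the3} via the KKT/Lagrangian route: it writes down the first-order optimality condition $0\in\partial\|M\|_*+\tfrac{1}{\tau}M-\mathcal{P}_\Omega Y$, unpacks the subgradient of $\|\cdot\|_*$ as $UV^*+W$ with $W\in\mathrm{T}^\perp$, $\|W\|\le 1$, and observes that conditions (a)--(c) are exactly what is needed for such a $W$ to exist; uniqueness is then invoked as a consequence of strong convexity. You instead run the primal perturbation argument directly, expanding $f(M+H)$ and using (a)--(c) to kill the linear term and bound the rest below by $\tfrac{1}{2\tau}\|H\|_F^2$.

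The two arguments are equivalent in substance---both rest on the same subgradient characterization and on strong convexity for uniqueness---but yours is slightly more self-contained: it avoids appealing to Lagrangian duality or to the abstract fact that strong convexity implies a unique minimizer, and it makes the role of the $\tfrac{1}{2\tau}\|H\|_F^2$ term completely explicit. The paper's version is terser because it can cite standard convex-optimization machinery. Your remark that the chosen $W$ lies in $\mathrm{T}^\perp$ because $\mathcal{P}_{\mathrm{T}^\perp}H$ has row and column spaces orthogonal to those of $U$ and $V$ is the right justification; the paper sidesteps this by working purely at the level of the optimality condition.
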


\begin{proof}
The Lagrangian function associated with the strongly convex programming (\ref{eq7})
is
\begin{equation}\label{eq21}
L(X,Y):=\|X\|_*+\frac{1}{2\tau}\|X\|_F^2-\langle
Y,\mathcal{P}_{\Omega}X-\mathcal{P}_{\Omega}M \rangle,
\end{equation}
where $Y$ is the the Lagrangian multiplier.
By the strong convexity of the objective and the first
order optimality of the Lagrangian function, $M$ is the unique
solution if and only if there is a dual matrix $Y$ satisfying
\begin{equation}\label{eq22}
0\in\partial\|M\|_*+\frac{1}{\tau}M-\mathcal{P}_{\Omega}Y.
\end{equation}
Since we only use the entries of $Y$ on $\Omega$ from the expression
above, it is more convenient to assume $Y=\mathcal{P}_{\Omega}Y$.
We have known that $\partial\|M\|_*=UV^*+W$ with
$\mathcal{P}_{\mathrm{T}}W=0$ and $\|W\|\leq 1$. Thus the sufficient
and necessary condition becomes: there exists $W$ such that
\begin{equation}\label{eq23}
W=Y-\frac{1}{\tau}M-UV^*
\end{equation}
with $Y=\mathcal{P}_{\Omega}Y$, $\mathcal{P}_{\mathrm{T}}W=0$, and
$\|W\|\leq 1$. Therefore, it suffices to find a matrix $Y$
obeying
\begin{eqnarray}\label{eq24}
\left\{\begin{array}{lll} &Y =\mathcal{P}_{\Omega}Y\\
&\mathcal{P}_\mathrm{T}(Y-\frac{1}{\tau}M - UV^*)=0 \\
&\|\mathcal{P}_{\mathrm{T}^\perp}(Y-\frac{1}{\tau}M - UV^*)\|\leq 1.
\end{array} \right.
\end{eqnarray}
By the expression of $\mathcal{P}_{\mathrm{T}}$ and the fact
$U^*U=V^*V=I_r$, it is not hard to see that
$\mathcal{P}_\mathrm{T}(UV^*)=UV^*$ and $\mathcal{P}_\mathrm{T}M=M$, and
hence $\mathcal{P}_{\mathrm{T}^\perp}(UV^*)=0$ and
$\mathcal{P}_{\mathrm{T}^\perp}M=0$. So \eqref{eq24} is equivalent to
\begin{eqnarray}\label{eq25}
\left\{\begin{array}{lll} &Y =\mathcal{P}_{\Omega}Y\\
&\mathcal{P}_\mathrm{T}Y=\frac{1}{\tau}M + UV^* \\
&\|\mathcal{P}_{\mathrm{T}^\perp}Y\|\leq 1,
\end{array} \right.
\end{eqnarray}
which concludes the proof.
\end{proof}

Next, we shall construct a valid dual certificate $Y$ obeying the
conditions listed in the theorem above. We need the following lemma from \cite{cand2}.
%
%

\begin{lemma}[{\cite[Corollary 3.7]{cand2}}, {\cite[Theorem 6]{cand3}}]\label{lem1}
Under the assumptions in Theorem \ref{the1}, we have, with probability at
least $1-n^{-3}$,
\begin{equation}\label{eq37.5}
\|\mathcal{P}_{\mathrm{T}^\perp}\Lambda\|\leq \frac{1}{2},\quad\mbox{where}
~\Lambda:=\mathcal{P}_{\Omega}\mathcal{P}_{\mathrm{T}}
(\mathcal{P}_{\mathrm{T}}\mathcal{P}_{\Omega}\mathcal{P}_{\mathrm{T}})^{-1}UV^*,
\end{equation}
and
\begin{equation}\label{eq38}
\frac{p}{2}\mathcal{I}\preceq \mathcal{P}_\mathrm{T}
\mathcal{P}_\Omega\mathcal{P}_\mathrm{T}\preceq
\frac{3p}{2}\mathcal{I},\quad\mbox{where}~p:=m/(n_1n_2).
\end{equation}
\end{lemma}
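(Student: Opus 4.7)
My plan is to prove both inequalities by concentration of measure for sums of independent random operators, exploiting the elementary decomposition
\begin{equation*}
\mathcal{P}_\Omega X \;=\; \sum_{a,b}\delta_{ab}\,\langle X,\,e_a^{(1)}(e_b^{(2)})^*\rangle\,e_a^{(1)}(e_b^{(2)})^*,
\end{equation*}
which, under the Bernoulli model, gives $\mathbb{E}\,\mathcal{P}_\Omega = p\mathcal{I}$ and turns each of the objects $\mathcal{P}_\mathrm{T}\mathcal{P}_\Omega\mathcal{P}_\mathrm{T}$ and $\mathcal{P}_{\mathrm{T}^\perp}\Lambda$ into a deviation of a sum of independent random operators from its mean.

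For the spectral sandwich (\ref{eq38}), I would conjugate the above decomposition with $\mathcal{P}_\mathrm{T}$ so that $\mathcal{P}_\mathrm{T}\mathcal{P}_\Omega\mathcal{P}_\mathrm{T}-p\mathcal{P}_\mathrm{T}$ becomes a sum of independent mean-zero rank-one self-adjoint summands indexed by $(a,b)$. The incoherence hypothesis (\ref{eq11}), combined with the formula (\ref{eq15}) for $\mathcal{P}_\mathrm{T}$, produces the key per-atom estimate
\begin{equation*}
\|\mathcal{P}_\mathrm{T}(e_a^{(1)}(e_b^{(2)})^*)\|_F^2 \;=\; \|U^*e_a^{(1)}\|^2 + \|V^*e_b^{(2)}\|^2 - \|U^*e_a^{(1)}\|^2\|V^*e_b^{(2)}\|^2 \;\leq\; \frac{2\mu r}{n_{(2)}},
\end{equation*}
which delivers both the variance proxy and the uniform bound required by the non-commutative Bernstein inequality of Ahlswede--Winter/Tropp. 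The assumption $m \geq C\mu^2 nr\log^6 n$ then secures (\ref{eq38}) with probability at least $1-n^{-3}$.

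For (\ref{eq37.5}), once (\ref{eq38}) is in hand, $\mathcal{P}_\mathrm{T}\mathcal{P}_\Omega\mathcal{P}_\mathrm{T}$ is invertible on $\mathrm{T}$, and I would use the Neumann expansion
\begin{equation*}
(\mathcal{P}_\mathrm{T}\mathcal{P}_\Omega\mathcal{P}_\mathrm{T})^{-1}\big|_{\mathrm{T}} \;=\; \frac{1}{p}\sum_{k=0}^{\infty}\Bigl(\mathcal{P}_\mathrm{T} - \tfrac{1}{p}\mathcal{P}_\mathrm{T}\mathcal{P}_\Omega\mathcal{P}_\mathrm{T}\Bigr)^{k},
\end{equation*}
substitute it into the definition of $\Lambda$, and invoke $\mathcal{P}_{\mathrm{T}^\perp}\mathcal{P}_\mathrm{T}=0$ to rewrite $\mathcal{P}_{\mathrm{T}^\perp}\Lambda$ as a series whose $k$-th term is a degree-$(k+1)$ polynomial in the centered operator $\mathcal{P}_\Omega - p\mathcal{I}$ applied to $UV^*$ and then projected onto $\mathrm{T}^\perp$. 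The main obstacle is the term-by-term spectral-norm control of this series: it demands high-moment trace estimates as in \cite{cand3}, or the sharper matrix Bernstein/decoupling argument of \cite{cand2}, both of which rely on the strong incoherence condition (\ref{eq13}) and on the sup-norm bound $\|UV^*\|_\infty \leq \sqrt{\mu r/(n_1 n_2)}$ from (\ref{eq12}) to initialize the recursion. The geometric contraction inherited from (\ref{eq38}) closes the series below the threshold $1/2$, and the $\log^6 n$ factor in (\ref{eq19}) is precisely the overhead of a union bound over the high-moment tails at every order $k$.
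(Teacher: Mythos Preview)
The paper does not prove Lemma~\ref{lem1} at all: it is quoted verbatim as a black box from \cite[Corollary~3.7]{cand2} and \cite[Theorem~6]{cand3}, and the authors proceed directly to use it in the construction of the dual certificate for Theorem~\ref{the4}. So there is nothing in the paper to compare your argument against.

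That said, what you have written is a faithful high-level outline of the proofs in the cited references. Your treatment of \eqref{eq38} via the rank-one decomposition of $\mathcal{P}_\Omega$, the per-atom bound $\|\mathcal{P}_\mathrm{T}(e_a^{(1)}(e_b^{(2)})^*)\|_F^2\le 2\mu r/n_{(2)}$, and a matrix Bernstein inequality is exactly the standard route (this part in fact goes back to \cite{cand1} and needs only the ordinary incoherence \eqref{eq11}). For \eqref{eq37.5}, the Neumann expansion of $(\mathcal{P}_\mathrm{T}\mathcal{P}_\Omega\mathcal{P}_\mathrm{T})^{-1}$ followed by order-by-order spectral control is precisely the strategy of \cite{cand2}; you are right that the strong incoherence condition \eqref{eq13} and the bound \eqref{eq12} are what drive the moment estimates, and that the $\log^6 n$ in \eqref{eq19} is the cost of summing the tail bounds over all orders~$k$. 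One small historical note: \cite{cand2} predates the now-standard matrix Bernstein machinery and relies instead on explicit trace-moment computations and decoupling, so invoking ``Ahlswede--Winter/Tropp'' is a slight anachronism, though it would yield the same conclusion (indeed with cleaner constants).
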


It is obvious that the matrix $\Lambda$ in Lemma \ref{lem1}
satisfies $\Lambda=\mathcal{P}_{\Omega}\Lambda$ and
$\mathcal{P}_{\mathrm{T}}\Lambda=UV^*$. Now we are going to construct a valid dual
certificate for exact matrix completion via strongly convex programming.

\begin{theorem}\label{the4}
Assume
\begin{equation}\label{eq36}
\tau\geq\frac{\|\mathcal{P}_{\mathrm{T}^\perp}\mathcal{P}_{\Omega}\mathcal{P}_{\mathrm{T}}
(\mathcal{P}_{\mathrm{T}}\mathcal{P}_{\Omega}\mathcal{P}_{\mathrm{T}})^{-1}M\|}{1-\|\mathcal{P}_{\mathrm{T}^\perp}\Lambda\|},
\end{equation}
where $\Lambda$ is defined in Lemma \ref{lem1}.
Then, under the assumptions in Theorem \ref{the1}, $M$ is the unique
solution to the strongly convex programming \eqref{eq7} with
probability at least $1-n^{-3}$.
\end{theorem}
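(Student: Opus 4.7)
The plan is to apply Theorem \ref{the3} by explicitly constructing a dual certificate $Y$ that meets its three conditions, leveraging Lemma \ref{lem1} to control the spectral norm of its $\mathrm{T}^\perp$ component. The key observation is that the matrix $\Lambda$ in Lemma \ref{lem1} already handles the ``$UV^*$'' part of condition (b) with the right spectral bound on $\mathrm{T}^\perp$, so I only need to add a correction term that accounts for the $\frac{1}{\tau}M$ term in condition (b) without ruining conditions (a) or (c).

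Concretely, I propose taking
\begin{equation*}
Y \;:=\; \Lambda \;+\; \frac{1}{\tau}\,\mathcal{P}_{\Omega}\mathcal{P}_{\mathrm{T}}
\bigl(\mathcal{P}_{\mathrm{T}}\mathcal{P}_{\Omega}\mathcal{P}_{\mathrm{T}}\bigr)^{-1} M.
\end{equation*}
Note that the operator $\mathcal{P}_{\mathrm{T}}\mathcal{P}_{\Omega}\mathcal{P}_{\mathrm{T}}$ restricted to $\mathrm{T}$ is invertible by the lower spectral bound $\tfrac{p}{2}\mathcal{I}\preceq\mathcal{P}_{\mathrm{T}}\mathcal{P}_{\Omega}\mathcal{P}_{\mathrm{T}}$ in Lemma \ref{lem1}, and $M\in\mathrm{T}$ since $\mathcal{P}_{\mathrm{T}}M=M$, so the definition makes sense on the event of Lemma \ref{lem1}.

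The verifications should be routine. Condition (a) holds because each summand is visibly in the range of $\mathcal{P}_{\Omega}$. Condition (b) follows from $\mathcal{P}_{\mathrm{T}}\Lambda=UV^*$ (which I would note from the definition of $\Lambda$) and from the identity $\mathcal{P}_{\mathrm{T}}\mathcal{P}_{\Omega}\mathcal{P}_{\mathrm{T}}(\mathcal{P}_{\mathrm{T}}\mathcal{P}_{\Omega}\mathcal{P}_{\mathrm{T}})^{-1}M=M$, yielding $\mathcal{P}_{\mathrm{T}}Y=UV^*+\tfrac{1}{\tau}M$. For condition (c), I apply the triangle inequality to obtain
\begin{equation*}
\|\mathcal{P}_{\mathrm{T}^\perp}Y\|
\;\le\;
\|\mathcal{P}_{\mathrm{T}^\perp}\Lambda\|
\;+\;
\frac{1}{\tau}\,\bigl\|\mathcal{P}_{\mathrm{T}^\perp}\mathcal{P}_{\Omega}\mathcal{P}_{\mathrm{T}}
(\mathcal{P}_{\mathrm{T}}\mathcal{P}_{\Omega}\mathcal{P}_{\mathrm{T}})^{-1}M\bigr\|,
\end{equation*}
and observe that hypothesis \eqref{eq36} on $\tau$ was precisely engineered so that the right-hand side is at most $1$. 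Theorem \ref{the3} then yields that $M$ is the unique minimizer of \eqref{eq7}, and the probability statement $1-n^{-3}$ is simply inherited from the event in Lemma \ref{lem1} on which both $\|\mathcal{P}_{\mathrm{T}^\perp}\Lambda\|\le 1/2$ and the invertibility of $\mathcal{P}_{\mathrm{T}}\mathcal{P}_{\Omega}\mathcal{P}_{\mathrm{T}}$ on $\mathrm{T}$ hold.

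There is no real obstacle here beyond bookkeeping: all of the heavy probabilistic work is already packaged in Lemma \ref{lem1}, and the role of $\tau$ is purely algebraic --- it absorbs the extra $\mathrm{T}^\perp$-mass introduced by the correction term. The only subtlety worth flagging in a full write-up is that the lower bound \eqref{eq36} is itself a random quantity depending on $\Omega$, so strictly speaking the statement ``$\tau$ is large enough'' should be read conditionally on the high-probability event of Lemma \ref{lem1}; the computability of the bound from the observed data $\mathcal{P}_{\Omega}M$ is what makes this useful in practice.
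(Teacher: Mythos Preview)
Your proposal is correct and follows essentially the same approach as the paper: you construct the identical dual certificate $Y=\Lambda+\tfrac{1}{\tau}\mathcal{P}_{\Omega}\mathcal{P}_{\mathrm{T}}(\mathcal{P}_{\mathrm{T}}\mathcal{P}_{\Omega}\mathcal{P}_{\mathrm{T}})^{-1}M$, verify conditions (a)--(c) of Theorem~\ref{the3} in the same way, and invoke Lemma~\ref{lem1} for the probability bound. Your added remarks on the invertibility of $\mathcal{P}_{\mathrm{T}}\mathcal{P}_{\Omega}\mathcal{P}_{\mathrm{T}}$ on $\mathrm{T}$ and on the randomness of the lower bound \eqref{eq36} are sound elaborations that the paper leaves implicit.
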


\begin{proof}
We prove the theorem by constructing a matrix $Y$ which satisfies
conditions in Theorem \ref{the3} with high probability.
Indeed, if we define
$$
Y=\Lambda+\frac{1}{\tau}\mathcal{P}_{\Omega}\mathcal{P}_{\mathrm{T}}
(\mathcal{P}_{\mathrm{T}}\mathcal{P}_{\Omega}\mathcal{P}_{\mathrm{T}})^{-1}M
$$
with $\Lambda$ defined as in Lemma \ref{lem1}, then the
conditions in Theorem \ref{the3} are satisfied with
probability at least $1-n^{-3}$.

By Lemma \ref{lem1}, with probability at least $1-n^{-3}$,
we have $\|\mathcal{P}_{\mathrm{T}^\perp}\Lambda\|\leq \frac{1}{2}$.
We verify (a)(b)(c) in Theorem \ref{the3} under the events that
$\|\mathcal{P}_{\mathrm{T}^\perp}\Lambda\|\leq \frac{1}{2}$.
First of all, $\tau>0$. Therefore, \eqref{eq7} is a strongly convex
programming. Next, by the construction of $Y$,
it is obvious that (a) and (b) in
Theorem \ref{the3} are true. It remains to show (c). In fact,
\eqref{eq36} implies that
$$
\frac{1}{\tau}\|\mathcal{P}_{\mathrm{T}^\perp}\mathcal{P}_{\Omega}\mathcal{P}_{\mathrm{T}}
(\mathcal{P}_{\mathrm{T}}\mathcal{P}_{\Omega}\mathcal{P}_{\mathrm{T}})^{-1}M\|\leq
1-\|\mathcal{P}_{\mathrm{T}^\perp}\Lambda\|.
$$
Since $\|\mathcal{P}_{\Omega^\perp}\Lambda\|\leq1/2$, we have
$$
\|\mathcal{P}_{\mathrm{T}^\perp}Y\|\leq \|\mathcal{P}_{\mathrm{T}^\perp}\Lambda\|+
\frac{1}{\tau}\|\mathcal{P}_{\mathrm{T}^\perp}\mathcal{P}_{\Omega}\mathcal{P}_{\mathrm{T}}
(\mathcal{P}_{\mathrm{T}}\mathcal{P}_{\Omega}\mathcal{P}_{\mathrm{T}})^{-1}M\|
\leq 1.
$$
Therefore, (c) is verified.
\end{proof}

The lower bound of $\tau$ in Theorem \ref{the4} is obviously finite.
However, it is hard to estimate in practice, as the quantities in \eqref{eq36}
are hard to estimate before we get $M$. In the following,
we present a computable lower bound of $\tau$, using only the given data $\mathcal{P}_{\Omega}M$.
Our starting point is  Lemma \ref{lem1}, which guarantee
\eqref{eq37.5} and \eqref{eq38} with very high probability.
Notice that
\begin{equation}
\| \mathcal{P}_{\Omega}\mathcal{P}_{\mathrm{T} } Z\|_F^2=
\langle  \mathcal{P}_{\Omega}\mathcal{P}_{\mathrm{T} }Z,  \mathcal{P}_{\Omega}\mathcal{P}_{\mathrm{T} }Z\rangle=
\langle Z, \mathcal{P}_{\mathrm{T} } \mathcal{P}_{\Omega}\mathcal{P}_{\mathrm{T} }Z\rangle.
\end{equation}
We have $$\| \mathcal{P}_{\Omega}\mathcal{P}_{\mathrm{T}}
(\mathcal{P}_{\mathrm{T}}
\mathcal{P}_{\Omega}\mathcal{P}_{\mathrm{T}})^{-1}M\|_F^2=\langle (\mathcal{P}_{\mathrm{T} } \mathcal{P}_{\Omega}\mathcal{P}_{\mathrm{T} })^{-1}M,M\rangle\leq \frac{2}{p}\|M\|_F^2,$$
and $$\| \mathcal{P}_{\Omega}\mathcal{P}_{\mathrm{T} } M\|_F^2\geq \frac{p}{2}\|M\|_F^2.$$
With these two inequalities, we can easily get
\begin{equation}\label{eq37}
\begin{split}
\frac{\|\mathcal{P}_{\mathrm{T}^\perp}\mathcal{P}_{\Omega}\mathcal{P}_{\mathrm{T}}
(\mathcal{P}_{\mathrm{T}}
\mathcal{P}_{\Omega}\mathcal{P}_{\mathrm{T}})^{-1}M\|}{1-\|\mathcal{P}_{\mathrm{T}^\perp}\Lambda\|}
&\leq
2\|\mathcal{P}_{\mathrm{T}^\perp}\mathcal{P}_{\Omega}\mathcal{P}_{\mathrm{T}}
(\mathcal{P}_{\mathrm{T}}
\mathcal{P}_{\Omega}\mathcal{P}_{\mathrm{T}})^{-1}M\|_F\leq 2\| \mathcal{P}_{\Omega}\mathcal{P}_{\mathrm{T}}(\mathcal{P}_{\mathrm{T}}
\mathcal{P}_{\Omega}\mathcal{P}_{\mathrm{T}})^{-1}M\|_F\cr
&\leq2 \sqrt{\frac{2}{p}} \|M\|_F
\leq \frac{4}{p}\|\mathcal{P}_{\Omega}\mathcal{P}_{\mathrm{T}}M\|_F=\frac{4}{p}\|\mathcal{P}_{\Omega}M\|_F.
\end{split}
\end{equation}
This bound can be computed by using only the sampling
$\mathcal{P}_{\Omega}M$. We summarize the
result into the following corollary.
\begin{corollary}\label{Cor1}
Assume
\begin{equation}
\tau\geq\frac{4}{p}\|\mathcal{P}_{\Omega}M\|_F,
\end{equation}
where $p=\frac{m}{n_1n_2}$ is the sampling ratio.
Then, under the assumptions in Theorem \ref{the1}, $M$ is the unique
solution to the strongly convex programming \eqref{eq7} with
probability at least $1-n^{-3}$.
\end{corollary}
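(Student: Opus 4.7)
The plan is to derive the corollary from Theorem \ref{the4} by a single chain of elementary inequalities that replaces the abstract lower bound (\ref{eq36}) with the computable quantity $\frac{4}{p}\|\mathcal{P}_{\Omega}M\|_F$. The calculation displayed in (\ref{eq37}) already contains the entire argument; what remains is to verify each step and observe that it uses only quantities available from the observed data on the same high-probability event that Theorem \ref{the4} relies upon.

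Concretely, I would first fix the high-probability event from Lemma \ref{lem1}, on which simultaneously $\|\mathcal{P}_{\mathrm{T}^\perp}\Lambda\|\leq 1/2$ and $\frac{p}{2}\mathcal{I}\preceq\mathcal{P}_{\mathrm{T}}\mathcal{P}_{\Omega}\mathcal{P}_{\mathrm{T}}\preceq\frac{3p}{2}\mathcal{I}$ hold. The first estimate bounds the denominator in (\ref{eq36}) from below by $1/2$, producing a harmless factor of $2$. For the numerator I would chain four routine steps: (i) $\|A\|\leq\|A\|_F$ to pass from the spectral norm to the Frobenius norm; (ii) drop the projection $\mathcal{P}_{\mathrm{T}^\perp}$ since it is a contraction in Frobenius norm; (iii) apply the identity $\|\mathcal{P}_{\Omega}\mathcal{P}_{\mathrm{T}}Z\|_F^2=\langle Z,\mathcal{P}_{\mathrm{T}}\mathcal{P}_{\Omega}\mathcal{P}_{\mathrm{T}}Z\rangle$ with $Z=(\mathcal{P}_{\mathrm{T}}\mathcal{P}_{\Omega}\mathcal{P}_{\mathrm{T}})^{-1}M$, together with the operator inequality $(\mathcal{P}_{\mathrm{T}}\mathcal{P}_{\Omega}\mathcal{P}_{\mathrm{T}})^{-1}\preceq (2/p)\mathcal{I}$ from Lemma \ref{lem1}, to conclude that $\|\mathcal{P}_{\Omega}\mathcal{P}_{\mathrm{T}}(\mathcal{P}_{\mathrm{T}}\mathcal{P}_{\Omega}\mathcal{P}_{\mathrm{T}})^{-1}M\|_F^2\leq (2/p)\|M\|_F^2$; and (iv) turn $\|M\|_F$ into the computable quantity $\|\mathcal{P}_{\Omega}M\|_F$ by using the same identity with $Z=M$ together with $\mathcal{P}_{\mathrm{T}}M=M$, which yields $\|\mathcal{P}_{\Omega}M\|_F^2=\|\mathcal{P}_{\Omega}\mathcal{P}_{\mathrm{T}}M\|_F^2\geq (p/2)\|M\|_F^2$.

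Composing these bounds produces the constant $2\cdot\sqrt{2/p}\cdot\sqrt{2/p}=4/p$, so under the hypothesis $\tau\geq\frac{4}{p}\|\mathcal{P}_{\Omega}M\|_F$ the assumption (\ref{eq36}) of Theorem \ref{the4} is automatically satisfied, and the conclusion of that theorem gives the result on the same $1-n^{-3}$ event. There is really no obstacle here: the corollary is essentially a repackaging of (\ref{eq37}), and the only care required is to track constants so that the factor from the denominator and the two appearances of $\sqrt{2/p}$ collapse to exactly $4/p$, and to notice at the last step that $\mathcal{P}_{\Omega}\mathcal{P}_{\mathrm{T}}M=\mathcal{P}_{\Omega}M$ is the only quantity actually accessible from the observed entries.
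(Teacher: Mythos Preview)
Your proposal is correct and follows exactly the same route as the paper: the argument leading to Corollary~\ref{Cor1} is precisely the chain of inequalities displayed in~(\ref{eq37}), and your steps (i)--(iv) reproduce that chain verbatim, including the use of the identity $\|\mathcal{P}_{\Omega}\mathcal{P}_{\mathrm{T}}Z\|_F^2=\langle Z,\mathcal{P}_{\mathrm{T}}\mathcal{P}_{\Omega}\mathcal{P}_{\mathrm{T}}Z\rangle$ twice (once with $Z=(\mathcal{P}_{\mathrm{T}}\mathcal{P}_{\Omega}\mathcal{P}_{\mathrm{T}})^{-1}M$ and once with $Z=M$) and the observation $\mathcal{P}_{\Omega}\mathcal{P}_{\mathrm{T}}M=\mathcal{P}_{\Omega}M$ at the end. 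There is nothing to add.
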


\subsection{The Robust Principle Component Analysis problem}
Next, we discuss the RPCA case. Recall that, in the RPCA problem,
we would like to recover the unknown low-rank matrix $M$ from the
sampling $D$ of its complete entries with a small fraction of them
corrupted.
We denote the noise in $D$ by $S_0:=D-M$. Since
only a small fraction of entries are corrupted, $S_0$ is a sparse matrix.
Let $\Omega$ be the supported set of $S_0$. In this section, we shall
show that $(M,S_0)$ is the unique solution of the strongly convex programming
\eqref{eq8} with high probability, provided $\tau$ exceeding a finite number.
An explicit lower bound of $\tau$ will be given as well.

Similar to the MC case, we first present a theorem that states a sufficient
condition for $(M,S_0)$ being the unique solution of \eqref{eq8}.
\begin{theorem}\label{the5}
Assume $\|\mathcal{P}_{\Omega}\mathcal{P}_{\mathrm{T}}\|\leq 1/2$.
Suppose that there exists a pair $(W,F)$ and a matrix $B$ obeying
\begin{equation}\label{eq26}
UV^*+W+\frac{1}{\tau}M=\lambda
(\mathrm{sgn}(S_0)+F+\mathcal{P}_{\Omega}B)+\frac{1}{\tau}S_0
\end{equation}
with
$$
\mathcal{P}_{\mathrm{T}}W=0,~\|W\|\leq \beta,
\qquad \mathcal{P}_{\Omega}F=0,~\|F\|_\infty\leq \beta,
~\|\mathcal{P}_{\Omega}B\|_F\leq \alpha,
$$
where $\alpha, \beta$ are positive parameters satisfying
\begin{equation}\label{eq27}
\begin{cases}
\beta\leq 1, \cr
\alpha +\beta \leq1/\lambda,\cr
\lambda \leq(1-\beta)/2\alpha.
\end{cases}
\end{equation}
Then $(M,S_0)$ is the unique solution of the strongly convex programming
(\ref{eq8}).
\end{theorem}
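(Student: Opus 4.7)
The plan is to show that $(M,S_0)$ minimizes the objective $f(L,S):=\|L\|_*+\tfrac{1}{2\tau}\|L\|_F^2+\lambda\|S\|_1+\tfrac{1}{2\tau}\|S\|_F^2$ over the affine constraint $L+S=D$; uniqueness is then automatic because the two Frobenius penalties make $f$ jointly $(1/\tau)$-strongly convex in $(L,S)$. I would parametrize an arbitrary feasible competitor as $(M+H, S_0-H)$ and try to prove $\Delta(H):=f(M+H,S_0-H)-f(M,S_0)\geq 0$, with equality only at $H=0$.

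For the lower bound on $\Delta(H)$, I would first apply the subgradient inequality to $\|\cdot\|_*$ at $M$ and to $\|\cdot\|_1$ at $S_0$, choosing the sharpest admissible subgradients so that the right-hand sides contain the full $\|\mathcal{P}_{\mathrm{T}^\perp}H\|_*$ and $\|\mathcal{P}_{\Omega^\perp}H\|_1$ contributions supplied by the duality relations~\eqref{eq10}. Expanding the two Frobenius penalties exactly contributes $\tfrac{1}{\tau}\langle M-S_0,H\rangle$ together with the strong-convexity term $\tfrac{1}{\tau}\|H\|_F^2$. At this point I would invoke the certificate identity~\eqref{eq26}, rearranged as $UV^*-\lambda\,\mathrm{sgn}(S_0)+\tfrac{1}{\tau}(M-S_0)=-W+\lambda F+\lambda\mathcal{P}_\Omega B$, together with the dual-norm estimates $|\langle W,H\rangle|\leq\beta\|\mathcal{P}_{\mathrm{T}^\perp}H\|_*$, $|\langle F,H\rangle|\leq\beta\|\mathcal{P}_{\Omega^\perp}H\|_1$ (which use $\mathcal{P}_\mathrm{T}W=0$, $\mathcal{P}_\Omega F=0$), and $|\langle\mathcal{P}_\Omega B,H\rangle|\leq\alpha\|\mathcal{P}_\Omega H\|_F$, to arrive at the working lower bound
\[
\Delta(H)\;\geq\;(1-\beta)\|\mathcal{P}_{\mathrm{T}^\perp}H\|_*+\lambda(1-\beta)\|\mathcal{P}_{\Omega^\perp}H\|_1-\lambda\alpha\|\mathcal{P}_\Omega H\|_F+\tfrac{1}{\tau}\|H\|_F^2.
\]

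The remaining, and central, step is to show this bound is nonnegative, and strictly positive for $H\neq 0$, under \eqref{eq27}. The only negative contribution $-\lambda\alpha\|\mathcal{P}_\Omega H\|_F$ is to be controlled by the assumption $\|\mathcal{P}_\Omega\mathcal{P}_\mathrm{T}\|\leq 1/2$, which yields the splitting $\|\mathcal{P}_\Omega H\|_F\leq\tfrac{1}{2}\|\mathcal{P}_\mathrm{T}H\|_F+\|\mathcal{P}_{\mathrm{T}^\perp}H\|_F$. I would then use the three inequalities of~\eqref{eq27} in concert: $\beta\leq 1$ certifies that the chosen $W,F$ are genuine subgradients of the respective norms; $\lambda\alpha\leq(1-\beta)/2$ (from $\lambda\leq(1-\beta)/(2\alpha)$) keeps the net coefficient on $\|\mathcal{P}_{\mathrm{T}^\perp}H\|_*$ nonnegative after the $\|\mathcal{P}_{\mathrm{T}^\perp}H\|_F$-piece of the cross term is absorbed; $\alpha+\beta\leq 1/\lambda$ is used to balance the slack against the $\|\mathcal{P}_{\Omega^\perp}H\|_1$ coefficient; and the strongly convex penalty $\tfrac{1}{\tau}\|H\|_F^2$ both delivers the strict positivity for uniqueness and, through a Young-type inequality against $\tfrac{\lambda\alpha}{2}\|\mathcal{P}_\mathrm{T}H\|_F$, absorbs the leftover $\|\mathcal{P}_\mathrm{T}H\|_F$ piece.

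The hardest part will be precisely this last reallocation: the positive pieces of the lower bound live on the incommensurable nuclear, $\ell_1$, and squared-Frobenius scales, while the negative cross term is measured in the Frobenius norm of $\mathcal{P}_\Omega H$. The three inequalities of~\eqref{eq27} must be combined delicately so that no $H$-independent negative residual is left after the absorption, and so that strict positivity of $\Delta(H)$ is obtained in every regime of perturbation direction $H$, including the borderline case $H\in\mathrm{T}$ where $\|\mathcal{P}_\Omega H\|_F$ reaches its upper bound $\tfrac{1}{2}\|H\|_F$.
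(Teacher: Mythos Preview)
Your overall architecture matches the paper's: parametrize competitors as $(M+H,S_0-H)$, pick sharpest subgradients to produce $\|\mathcal{P}_{\mathrm{T}^\perp}H\|_*$ and $\lambda\|\mathcal{P}_{\Omega^\perp}H\|_1$, substitute the certificate identity, and bound the three cross terms by the dual-norm inequalities. Up to the line
\[
\Delta(H)\ \ge\ (1-\beta)\|\mathcal{P}_{\mathrm{T}^\perp}H\|_*+\lambda(1-\beta)\|\mathcal{P}_{\Omega^\perp}H\|_1-\lambda\alpha\|\mathcal{P}_\Omega H\|_F+\tfrac{1}{\tau}\|H\|_F^2
\]
everything is fine. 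The gap is in the last step.

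Your plan is to split $\|\mathcal{P}_\Omega H\|_F\le\tfrac12\|\mathcal{P}_{\mathrm{T}}H\|_F+\|\mathcal{P}_{\mathrm{T}^\perp}H\|_F$ and then absorb the residual linear term $-\tfrac{\lambda\alpha}{2}\|\mathcal{P}_{\mathrm{T}}H\|_F$ with the quadratic $\tfrac{1}{\tau}\|H\|_F^2$ ``through a Young-type inequality''. That cannot work as stated: for any constants $c,d>0$ the function $t\mapsto -ct+dt^2$ is strictly negative on $(0,c/d)$, so Young's inequality necessarily leaves the $H$-independent deficit $-c^2/(4d)$ that you yourself flag as the danger. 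Concretely, take $H\in\mathrm{T}$ with $\|H\|_F$ small: your lower bound then reads $\lambda(1-\beta)\|\mathcal{P}_{\Omega^\perp}H\|_1-\tfrac{\lambda\alpha}{2}\|H\|_F+\tfrac{1}{\tau}\|H\|_F^2$, and the quadratic alone cannot rescue the linear term near $0$. (One could try to also borrow from $\|\mathcal{P}_{\Omega^\perp}H\|_1$ for $H\in\mathrm{T}$, but your proposal does not do this, and it is not clear the constants in \eqref{eq27} would close that route.)

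The paper avoids this completely by a different, self-referencing estimate for $\|\mathcal{P}_\Omega H\|_F$: from $\|\mathcal{P}_\Omega\mathcal{P}_{\mathrm{T}}\|\le\tfrac12$ one writes
\[
\|\mathcal{P}_\Omega H\|_F\ \le\ \tfrac12\|H\|_F+\|\mathcal{P}_{\mathrm{T}^\perp}H\|_F\ \le\ \tfrac12\|\mathcal{P}_\Omega H\|_F+\tfrac12\|\mathcal{P}_{\Omega^\perp}H\|_F+\|\mathcal{P}_{\mathrm{T}^\perp}H\|_F,
\]
and rearranging gives $\|\mathcal{P}_\Omega H\|_F\le \|\mathcal{P}_{\Omega^\perp}H\|_1+2\|\mathcal{P}_{\mathrm{T}^\perp}H\|_*$. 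This bounds the negative term \emph{entirely} by the two positive norm pieces already present, so no $\|\mathcal{P}_{\mathrm{T}}H\|_F$ residual ever appears, the conditions \eqref{eq27} close the coefficients directly, and the quadratic $\tfrac{1}{\tau}\|H\|_F^2$ is used only for uniqueness, not for the nonnegativity of $\Delta(H)$. Replace your splitting by this one and the rest of your argument goes through.
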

\begin{proof}
The main idea of the proof follows the arguments of
\cite[Lemma 2.4]{cand4}. We consider a feasible perturbation of the
form $(M+H,S_0-H)$ and show that the objective increases unless
$H=0$. Let $UV^*+W_0+\frac{1}{\tau}M$ be an arbitrary subgradient
of $\|L\|_*+\frac{1}{2\tau}\|L\|^2_F$ at $M$, and so we have
$\mathcal{P}_{\mathrm{T}}W_0=0$ and $\|W_0\|\leq1$. Similarly, let
$\lambda(\mathrm{sgn}(S_0)+F_0)+\frac{1}{\tau}S_0$ be an arbitrary subgradient of
$\lambda\|S\|_1 +\frac{1}{2\tau}\|S\|^2_F$ at $S_0$, and we have
$\mathcal{P}_{\Omega}F_0=0$ and $\|F_0\|_{\infty}\leq1$.
By the definition of
subgradient, it holds
\begin{equation}\label{eq28}
\begin{split}
&f(M+H,S_0-H)\cr
&\geq f(M,S_0)+\langle UV^*+W_0+\frac{1}{\tau}M,H\rangle-
\langle \lambda(\mathrm{sgn}(S_0)+F_0)+\frac{1}{\tau}S_0,H\rangle.
\end{split}
\end{equation}
By the construction in \cite{cand4}, we can always choose $W_0$ in \eqref{eq28}
such that $\langle W_0,H\rangle=\|\mathcal{P}_{\mathrm{T}^\perp}H\|_* $, and $F_0$
such that $\langle F_0,H\rangle=-\|\mathcal{P}_{\Omega^\perp}H\|_1$.
Therefore,
\begin{equation}\label{eq29}
\begin{split}
&f(M+H,S_0-H)\cr
&\geq f(M,S_0)+\|\mathcal{P}_{\mathrm{T}^\perp}H\|_*+
\|\mathcal{P}_{\Omega^\perp}H\|_1+\langle UV^*+\frac{1}{\tau}M-\lambda
\mathrm{sgn}(S_0)-\frac{1}{\tau}S_0 ,H\rangle\cr
&= f(M,S_0)+\|\mathcal{P}_{\mathrm{T}^\perp}H\|_*+
\|\mathcal{P}_{\Omega^\perp}H\|_1\cr
&\qquad\qquad+\langle UV^*+\frac{1}{\tau}M-\lambda
\mathrm{sgn}(S_0)-\frac{1}{\tau}S_0 -\lambda\mathcal{P}_{\Omega}B ,H\rangle
+\lambda \langle \mathcal{P}_{\Omega}B,H\rangle\cr
&=f(M,S_0)+\|\mathcal{P}_{\mathrm{T}^\perp}H\|_*+
\|\mathcal{P}_{\Omega^\perp}H\|_1 +\langle -W+\lambda F ,H\rangle +\lambda \langle
\mathcal{P}_{\Omega}B,H\rangle,
\end{split}
\end{equation}
where in the last equality we have used the assumption (\ref{eq26}). By
the triangle inequality and the assumptions on $W$ and $F$, we derive
that
\begin{equation}\label{eq30}
\begin{split}
|\langle -W+\lambda F ,H\rangle|
&\leq |\langle W ,H\rangle|+\lambda |\langle  F ,H\rangle|\cr
&=|\langle W ,\mathcal{P}_{\mathrm{T}^\perp}H\rangle|+\lambda
|\langle  F ,\mathcal{P}_{\Omega^\perp}H\rangle|\cr
&\leq  \|W \|\cdot \|\mathcal{P}_{\mathrm{T}^\perp}H\|_*+\lambda \|
F \|_\infty \cdot\|\mathcal{P}_{\Omega^\perp}H\|_1\cr
&\leq \beta(\|\mathcal{P}_{\mathrm{T}^\perp}H\|_*+\lambda
\|\mathcal{P}_{\Omega^\perp}H\|_1).
\end{split}
\end{equation}
By the Cauchy-Schwarz inequality, it holds
\begin{equation}\label{eq31}
|\langle \mathcal{P}_{\Omega}B,H\rangle|=|\langle
\mathcal{P}_{\Omega}B,\mathcal{P}_{\Omega}H\rangle|
\leq \alpha\| \mathcal{P}_{\Omega}H\|_F.
\end{equation}
Furthermore, we have
\begin{equation}\label{eq32}
\begin{split}
\| \mathcal{P}_{\Omega}H\|_F&\leq
\|\mathcal{P}_{\Omega}\mathcal{P}_{\mathrm{T}}H\|_F+\|\mathcal{P}_{\Omega}\mathcal{P}_{\mathrm{T}^\perp}H\|_F
\leq \frac{1}{2}\|H\|_F+\|\mathcal{P}_{\mathrm{T}^\perp}H\|_F\cr
&\leq \frac{1}{2}\|\mathcal{P}_{\Omega}H\|_F+\frac{1}{2}
\|\mathcal{P}_{{\Omega}^\perp}H\|_F+\|\mathcal{P}_{\mathrm{T}^\perp}H\|_F,
\end{split}
\end{equation}
which implies
\begin{equation}\label{eq33}
\| \mathcal{P}_{\Omega}H\|_F\leq\|\mathcal{P}_{{\Omega}^\perp}H\|_F+2\|\mathcal{P}_{\mathrm{T}^\perp}H\|_F
\leq\|\mathcal{P}_{{\Omega}^\perp}H\|_1+2\|\mathcal{P}_{\mathrm{T}^\perp}H\|_*.
\end{equation}
Combining all the inequalities above together, we get
\begin{equation}\label{eq34}
\begin{split}
f(M+H,S_0-H)&-f(M,S_0)\cr
\geq&(1-\beta-2\alpha
\lambda)\|\mathcal{P}_{\mathrm{T}^\perp}H\|_*+(1-\lambda\beta-\lambda\alpha)
\|\mathcal{P}_{\Omega^\perp}H\|_1.
\end{split}
\end{equation}
This, together with \eqref{eq27}, implies that $(M,S_0)$ is
a solution to \eqref{eq8}. The uniqueness follows from the strong
convexity of the objective in \eqref{eq8}.
\end{proof}

Before we derive a lower bound of $\tau$ for the RPCA problem, we
need the following lemmas.
\begin{lemma}[{\cite[Lemmas 2.8]{cand4}}]\label{lem2}
Assume that $\Omega\sim Ber(\rho)$ with parameter $\rho \leq \rho_s$
for some $\rho_s>0$. Then there exists a matrix $W^L$ in the form of
$W^L=\mathcal{P}_{\mathrm{T}^\perp}Y$
satisfying: under the other assumptions of Theorem
\ref{the2},

(a)~$\|W^L\|<1/4$,

(b)~$\|\mathcal{P}_{\Omega}(UV^*+W^L)\|_F<\lambda/4$,

(c)~$\|\mathcal{P}_{\Omega^\perp}(UV^*+W^L)\|_\infty<\lambda/4$.
\end{lemma}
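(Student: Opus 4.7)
The plan is to construct $W^L$ by Gross's golfing scheme, following the approach of \cite{cand4}. First I would partition the complement $\Omega^\perp$ into $j_0=O(\log n_{(1)})$ independent Bernoulli subsets $\Omega_1,\dots,\Omega_{j_0}\sim\mathrm{Ber}(q)$, with $q$ chosen so that $(1-q)^{j_0}=\rho$; then $\Omega^\perp=\bigcup_k\Omega_k$ has the correct Bernoulli marginal. Setting $Z_0:=UV^*$ and iterating $Z_k:=(\mathcal{P}_{\mathrm{T}}-q^{-1}\mathcal{P}_{\mathrm{T}}\mathcal{P}_{\Omega_k}\mathcal{P}_{\mathrm{T}})Z_{k-1}$, I would define
\[
Y:=q^{-1}\sum_{k=1}^{j_0}\mathcal{P}_{\Omega_k}Z_{k-1},\qquad W^L:=\mathcal{P}_{\mathrm{T}^\perp}Y.
\]
By construction $Y$ is supported on $\Omega^\perp$, and telescoping the recursion yields $\mathcal{P}_{\mathrm{T}}Y=UV^*-Z_{j_0}$, so the key identity $UV^*+W^L=Y+Z_{j_0}$ holds.

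Next I would apply the Bernoulli near-isometry (the analog of Lemma \ref{lem1}) to each $\Omega_k$: under the assumptions of Theorem \ref{the2} the operator $\mathcal{P}_{\mathrm{T}}-q^{-1}\mathcal{P}_{\mathrm{T}}\mathcal{P}_{\Omega_k}\mathcal{P}_{\mathrm{T}}$ has norm at most $1/2$ on $\mathrm{T}$ with high probability. This contracts the Frobenius norm geometrically, so $\|Z_{j_0}\|_F\leq 2^{-j_0}\|UV^*\|_F$ is polynomially small in $n_{(1)}$. Because $\mathcal{P}_\Omega Y=0$, condition (b) then follows at once from $\mathcal{P}_\Omega(UV^*+W^L)=\mathcal{P}_\Omega Z_{j_0}$, whose Frobenius norm is well below $\lambda/4=1/(4\sqrt{n_{(1)}})$. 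For (c), the identity $UV^*+W^L=Y+Z_{j_0}$ reduces the task to bounding $\|Y\|_\infty$ and $\|Z_{j_0}\|_\infty$; a parallel concentration argument shows the same iteration operator contracts $\|\cdot\|_\infty$ by a constant factor, so $\|Z_k\|_\infty$ decays geometrically from the incoherence bound $\|UV^*\|_\infty\leq\sqrt{\mu r/(n_1n_2)}$, and a triangle-inequality sum of the $\|q^{-1}\mathcal{P}_{\Omega_k}Z_{k-1}\|_\infty$ terms delivers the bound below $\lambda/4$.

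The main obstacle is (a), the operator-norm bound $\|W^L\|<1/4$. I would decompose
\[
\|W^L\|\;\leq\;\sum_{k=1}^{j_0}q^{-1}\bigl\|\mathcal{P}_{\mathrm{T}^\perp}\mathcal{P}_{\Omega_k}Z_{k-1}\bigr\|,
\]
and exploit the crucial independence between $\Omega_k$ and $Z_{k-1}$ (the latter being measurable with respect to $\Omega_1,\dots,\Omega_{k-1}$) to apply the matrix Bernstein, or noncommutative Khintchine, inequality conditionally on $Z_{k-1}$. Each term is then controlled by roughly $C\sqrt{n_{(1)}\log n_{(1)}/q}\,\|Z_{k-1}\|_\infty$, and the geometric decay of $\|Z_{k-1}\|_\infty$ established above turns the sum into a convergent geometric series whose total drops below $1/4$ once the constants $\rho_r,\rho_s$ in \eqref{eq20} are small enough. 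The delicate points are maintaining clean independence across iterations (the reason for partitioning $\Omega^\perp$) and invoking a sharp matrix concentration inequality to absorb the $n_{(1)}$-dependent prefactor; the infinity-norm contraction step, which requires a separate Bernoulli bound on $\|(\mathcal{P}_{\mathrm{T}}-q^{-1}\mathcal{P}_{\mathrm{T}}\mathcal{P}_{\Omega_k}\mathcal{P}_{\mathrm{T}})Z\|_\infty$ in terms of $\|Z\|_\infty$, is the other technical pinch point that must be handled uniformly over the $j_0$ iterations.
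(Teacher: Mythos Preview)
The paper does not actually prove this lemma; it merely quotes it from \cite{cand4} and remarks that ``the construction of $W^L$ and $W^S$ can be found in \cite{cand4}.'' Your proposal correctly reconstructs the golfing-scheme argument that \cite{cand4} uses to build $W^L$: the partition of $\Omega^\perp$ into independent Bernoulli batches, the recursion $Z_k=(\mathcal{P}_{\mathrm{T}}-q^{-1}\mathcal{P}_{\mathrm{T}}\mathcal{P}_{\Omega_k}\mathcal{P}_{\mathrm{T}})Z_{k-1}$, geometric contraction in $\|\cdot\|_F$ and $\|\cdot\|_\infty$ via the Bernoulli near-isometry, and matrix Bernstein for the operator-norm bound on $W^L$. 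This is exactly the route of the cited source, so there is nothing to compare against in the present paper beyond confirming that your outline matches the construction it imports.
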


\begin{lemma}[{\cite[Lemmas 2.9]{cand4}}]\label{lem3}
Assume that $S_0$ is supported on a set $\Omega$ sampled as in Lemma \ref{lem2},
and that the signs of $S_0$ are i.i.d. symmetric (and
independent of $\Omega$). Then there exists a matrix $W^S$
in the form of $W^S=\mathcal{P}_{\mathrm{T}^\perp}Z$ satisfying:
under the other assumptions of
Theorem \ref{the2},

(a)~$\|W^S\|<1/4$,

(b)~$\|\mathcal{P}_{\Omega^\perp}W^S\|_\infty<\lambda/4$,
\end{lemma}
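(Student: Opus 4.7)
The plan is to construct $W^S = \mathcal{P}_{\mathrm{T}^\perp} Z$ explicitly from $\mathrm{sgn}(S_0)$ via a convergent Neumann series, and then bound the two required norms using concentration inequalities adapted to random sign matrices. A natural choice, motivated by the dual-certificate constructions in Cand\`es--Li--Ma--Wright, is
$$
Z \;=\; \lambda \sum_{k\geq 0}(\mathcal{P}_{\Omega}\mathcal{P}_{\mathrm{T}}\mathcal{P}_{\Omega})^k \mathrm{sgn}(S_0),
$$
which is well defined provided $\|\mathcal{P}_{\Omega}\mathcal{P}_{\mathrm{T}}\| < 1$. Under the Bernoulli model with $\rho \leq \rho_s$ sufficiently small, this operator-norm bound holds with very high probability via a matrix Bernstein / Chernoff argument combined with the incoherence hypothesis (and indeed the assumption $\|\mathcal{P}_{\Omega}\mathcal{P}_{\mathrm{T}}\|\leq 1/2$ already appearing in Theorem \ref{the5} is exactly the deterministic consequence I would rely on).

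Having shown the series converges and defined $W^S = \mathcal{P}_{\mathrm{T}^\perp} Z$, so that $W^S\in\mathrm{T}^\perp$ by construction, I would turn to (a). Each term $\mathcal{P}_{\mathrm{T}^\perp}(\mathcal{P}_{\Omega}\mathcal{P}_{\mathrm{T}}\mathcal{P}_{\Omega})^k \mathrm{sgn}(S_0)$ is a linear combination, with deterministic operator-valued coefficients, of the i.i.d.\ symmetric signs $\mathrm{sgn}(S_0)_{ij}$. A non-commutative Khintchine inequality (or the matrix Rudelson / matrix Bernstein inequality, conditioning on $\Omega$) then bounds its spectral norm in terms of the Hilbert--Schmidt size of the coefficient operators. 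Summing the resulting geometric decay in $k$ and plugging in $\lambda = 1/\sqrt{n_{(1)}}$ should yield $\|W^S\| < 1/4$ with probability $1-O(n_{(1)}^{-10})$.

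For (b), I would fix an entry $(i,j)\in\Omega^\perp$ and expand $\langle e_i^{(1)}(e_j^{(2)})^{*}, W^S\rangle$ as a weighted scalar sum of the independent random signs. Hoeffding's inequality then controls a single entry with a sub-Gaussian tail whose variance proxy is estimated using the incoherence bounds \eqref{eq11}--\eqref{eq12} (these give uniform control on the size of the coefficient vector). A union bound over the $O(n_1n_2)$ entries of $\Omega^\perp$ then delivers $\|\mathcal{P}_{\Omega^\perp} W^S\|_\infty < \lambda/4$ with the required probability.

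The main obstacle is clearly step (a): one must control the spectral norm of a random matrix that is itself a composition of random projections (through $\mathcal{P}_\Omega$) with an independent random sign matrix, and a naive bound applied term-by-term loses too much in the geometric sum. The delicate point is to show that the concentration of each iterate of $(\mathcal{P}_{\Omega}\mathcal{P}_{\mathrm{T}}\mathcal{P}_{\Omega})^k\mathrm{sgn}(S_0)$ in spectral norm is tight enough, as a function of $k$ and the incoherence parameter $\mu$, that summing still leaves room for the $1/4$ threshold; this is where the two sources of randomness ($\Omega$ and the signs) must be handled jointly rather than separately. Step (b) is comparatively routine once the series is under control and the incoherence has been used to size the entry-wise coefficients.
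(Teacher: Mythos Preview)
The paper does not give its own proof of this lemma: it is quoted directly from Cand\`es--Li--Ma--Wright \cite{cand4} (their Lemma~2.9), and the paper only remarks that ``the construction of $W^L$ and $W^S$ can be found in \cite{cand4}.'' So there is no in-paper argument to compare against.

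That said, your construction $Z=\lambda\sum_{k\geq0}(\mathcal{P}_\Omega\mathcal{P}_T\mathcal{P}_\Omega)^k\mathrm{sgn}(S_0)$, i.e.\ $W^S=\lambda\,\mathcal{P}_{T^\perp}(\mathcal{P}_\Omega-\mathcal{P}_\Omega\mathcal{P}_T\mathcal{P}_\Omega)^{-1}\mathrm{sgn}(S_0)$, is exactly the one in \cite{cand4}; it also yields $\mathcal{P}_\Omega W^S=\lambda\,\mathrm{sgn}(S_0)$, which the present paper later uses in the proof of Theorem~\ref{the6}. Your plan for (b) via Hoeffding on each entry plus a union bound, with coefficient sizes controlled by incoherence, is the approach in \cite{cand4}. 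For (a), the argument in \cite{cand4} is organized slightly differently from your term-by-term Khintchine sum: they exploit that, under the Bernoulli/sign model, $E=\mathrm{sgn}(S_0)$ has i.i.d.\ entries in $\{-1,0,1\}$, bound $\|E\|$ directly by a standard random-matrix estimate, and then combine this with the operator bound on $(\mathcal{P}_\Omega-\mathcal{P}_\Omega\mathcal{P}_T\mathcal{P}_\Omega)^{-1}$ coming from $\|\mathcal{P}_\Omega\mathcal{P}_T\|\leq\sigma<1$. Your worry about losing too much by summing each $k$ separately is precisely why \cite{cand4} handles the leading term and the operator tail in one stroke rather than iterate-by-iterate; apart from this packaging difference, your strategy is the same as theirs.
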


The construction of $W^L$ and $W^S$ can be found in \cite{cand4}.
Now, we show that $(M,S_0)$ is the unique solution to the strongly convex
programming \eqref{eq8} with high probability. The idea is to verify
that $W^L+W^S$ is a valid dual certificate when the parameter $\tau$
goes beyond some finite value.
More precisely, we check the conditions in Theorem \ref{the5}.

\begin{theorem}\label{the6}
Assume
\begin{equation}\label{eq40}
\tau\geq\frac{2\|\mathcal{P}_{\Omega^\perp}D\|_\infty+\lambda\|\mathcal{P}_{\Omega}(M-S_0)\|_F}
{\lambda(1-\lambda)}
\end{equation}
Then, under the assumptions of Theorem \ref{the2},  $(M,S_0)$ is
the unique solution to the strongly convex programming \eqref{eq8}
with probability at least $1-cn_{(1)}^{-10}$, where $c$ is the
numerical constant in Theorem \ref{the2}.
\end{theorem}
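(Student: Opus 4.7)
The plan is to apply Theorem \ref{the5} by exhibiting a valid triple $(W,F,B)$ together with parameters $(\alpha,\beta)$ obeying \eqref{eq27}. The premise $\|\mathcal{P}_{\Omega}\mathcal{P}_{\mathrm{T}}\|\leq 1/2$ of Theorem \ref{the5} is a standard byproduct of the incoherence assumptions used throughout \cite{cand4} and holds on the same high-probability event on which Lemmas \ref{lem2} and \ref{lem3} are established, so I would invoke it without extra work.

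For the certificate I would set $W:=W^L+W^S$ with $W^L,W^S$ drawn from Lemmas \ref{lem2} and \ref{lem3}, giving $\mathcal{P}_{\mathrm{T}}W=0$ and $\|W\|<1/2$ by the triangle inequality. To force \eqref{eq26} I would then \emph{define} $F$ and $B$ by projecting \eqref{eq26} onto $\Omega^\perp$ and $\Omega$ respectively, namely
\begin{equation*}
\lambda F := \mathcal{P}_{\Omega^\perp}(UV^*+W)+\tfrac{1}{\tau}\mathcal{P}_{\Omega^\perp}D, \qquad \lambda\mathcal{P}_\Omega B := \mathcal{P}_\Omega(UV^*+W)-\lambda\,\mathrm{sgn}(S_0)+\tfrac{1}{\tau}\mathcal{P}_\Omega(M-S_0),
\end{equation*}
after using $\mathcal{P}_{\Omega^\perp}M=\mathcal{P}_{\Omega^\perp}D$ and $\mathcal{P}_{\Omega^\perp}S_0=0$. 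With these definitions, equation \eqref{eq26} and the structural requirements $\mathcal{P}_\Omega F=0$ are automatic.

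For the size estimates, Lemma \ref{lem2}(c) combined with Lemma \ref{lem3}(b) gives $\|\mathcal{P}_{\Omega^\perp}(UV^*+W)\|_\infty<\lambda/2$, so $\|F\|_\infty<\tfrac{1}{2}+\tfrac{1}{\lambda\tau}\|\mathcal{P}_{\Omega^\perp}D\|_\infty$. Since the standard construction of $W^S$ in \cite{cand4} satisfies $\mathcal{P}_\Omega W^S=\lambda\,\mathrm{sgn}(S_0)$, the $\Omega$-mismatch collapses to $\mathcal{P}_\Omega(UV^*+W^L)$, whose Frobenius norm is bounded by $\lambda/4$ via Lemma \ref{lem2}(b); this yields $\|\mathcal{P}_\Omega B\|_F<\tfrac{1}{4}+\tfrac{1}{\lambda\tau}\|\mathcal{P}_\Omega(M-S_0)\|_F$.

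I would then take $\beta$ and $\alpha$ to be these right-hand sides and translate \eqref{eq27} into lower bounds on $\tau$. The conditions $\beta\leq 1$ and $\alpha+\beta\leq 1/\lambda$ impose only mild constraints, whereas the binding inequality $2\alpha\lambda+\beta\leq 1$ rearranges into an inequality of the same shape as \eqref{eq40}. The main obstacle is the constant bookkeeping: three independent sources of slack (the $1/2$ budget in $\|W\|$, the $\lambda/2$ budget in $\|F\|_\infty$, and the $\lambda/4$ budget in $\|\mathcal{P}_\Omega B\|_F$) must be packaged simultaneously with the two $1/\tau$ correction terms so that the composite constraint collapses to the single tight bound asserted in \eqref{eq40}. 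Once that is carried out, Theorem \ref{the5} applies and delivers the uniqueness of $(M,S_0)$ on the event of probability at least $1-cn_{(1)}^{-10}$ guaranteed by Lemmas \ref{lem2} and \ref{lem3}.
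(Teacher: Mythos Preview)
Your approach is essentially the paper's: the same certificate $W=W^L+W^S$, the same use of $\mathcal{P}_\Omega W^S=\lambda\,\mathrm{sgn}(S_0)$ together with Lemma~\ref{lem2}(b)(c) and Lemma~\ref{lem3}(a)(b), and the same identification of $2\alpha\lambda+\beta\le 1$ as the binding constraint. The only cosmetic difference is that the paper parametrizes $\alpha=\epsilon/(2\lambda)$, $\beta=1-\epsilon$ (so that $2\alpha\lambda+\beta=1$ holds automatically) and then optimizes the single free parameter $\epsilon\in[\lambda/2,1/2]$ to balance the two resulting lower bounds on $\tau$; this optimization is exactly the ``constant bookkeeping'' you flagged but did not carry out.
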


\begin{proof}
We prove the theorem by checking the conditions in Theorem \ref{the5}.
Due to the relation between $F$, $W$ and $B$ in \eqref{eq26},
it suffices to show that, there exists a matrix $W$ obeying
\begin{eqnarray}\label{eq35}
\left\{\begin{array}{lll} &\mathcal{P}_{\mathrm{T}}W=0,\\
&\|W\|\leq \beta, \\
&\|\mathcal{P}_{\Omega^\perp}(UV^*+W+
\frac{1}{\tau}M-\frac{1}{\tau}S_0)\|_\infty \leq \beta\lambda,\\
&\|\mathcal{P}_\Omega (UV^*+W -\lambda \mathrm{sgn}(S_0)+\frac{1}{\tau}M
-\frac{1}{\tau}S_0)\|_F\leq \alpha \lambda,
\end{array} \right.
\end{eqnarray}
where $\alpha,\beta$ satisfies \eqref{eq27}.

Let $W=W^L+W^S$ with $W^L$ and $W^S$ defined in Lemmas \ref{lem2}
and \ref{lem3} respectively. We show that \eqref{eq35} holds true.
For simplicity, we denote
$$
\gamma:=\|\mathcal{P}_{\Omega^\perp}(M-S_0)\|_\infty,\quad
\delta:=\|\mathcal{P}_{\Omega}(M-S_0)\|_F.
$$
We choose $\alpha=\frac{\epsilon}{2\lambda}$ and $\beta=1-\epsilon$,
where $0\leq\epsilon<\frac12$ is a parameter determined later.
Then, \eqref{eq27} is satisfied for any $\lambda\leq1$; note that the final choice $\lambda=1/\sqrt{n_{(1)}}\leq 1$.
By the expressions of $W^S$ and $W^L$, it is obvious that $W\in
\mathrm{T}^\perp$. Since $\|W^L\|<1/4$ and $\|W^S\|<1/4$, we have
$\|W\|\leq \|W^L\|+\|W^S\|<1/2\leq\beta$. For the third inequality in
(\ref{eq35}), we have
\begin{eqnarray}\label{eq41}
\begin{array}{lll}& &\|\mathcal{P}_{\Omega^\perp}(UV^*+W+
\frac{1}{\tau}M-\frac{1}{\tau}S_0)\|_\infty \\
&\leq&\|\mathcal{P}_{\Omega^\perp}(UV^*+W^L)\|_\infty+\|\mathcal{P}_{\Omega^\perp}W^S\|_\infty
+\frac{1}{\tau}\|\mathcal{P}_{\Omega^\perp}(M-S_0)\|_\infty \\
&\leq& \frac{\lambda}{4}+\frac{\lambda}{4}+\frac{1}{\tau}\|\mathcal{P}_{\Omega^\perp}(M-S_0)\|_\infty\\
&=&\frac{\lambda}{2}+\frac{\gamma}{\tau}.
\end{array}
\end{eqnarray}
For the last inequality in \eqref{eq35},
we notice that $\mathcal{P}_\Omega(W^S)=\lambda \mathrm{sgn}(S_0)$ as shown in \cite{cand4},
and therefore we have
\begin{eqnarray}\label{eq42}
\begin{array}{lll}
& &\|\mathcal{P}_\Omega (UV^*+W -\lambda \mathrm{sgn}(S_0)+\frac{1}{\tau}M
-\frac{1}{\tau}S_0)\|_F\\
&=&\|\mathcal{P}_\Omega (UV^*+W -W^S+\frac{1}{\tau}M
-\frac{1}{\tau}S_0)\|_F\\
&= & \|\mathcal{P}_\Omega (UV^*+W^L+\frac{1}{\tau}M
-\frac{1}{\tau}S_0)\|_F\\
&\leq & \|\mathcal{P}_\Omega
(UV^*+W^L)\|_F+\frac{1}{\tau}\|\mathcal{P}_\Omega(M
-S_0))\|_F\\
&=&\frac{\lambda}{4}+\frac{\delta}{\tau}.
\end{array}
\end{eqnarray}
In order that the last two inequalities in \eqref{eq35} are satisfied, we have to
choose a $\tau$ such that
$$
\frac{\lambda}{2}+\frac{\gamma}{\tau}\leq(1-\epsilon)\lambda,
\quad\mbox{and}\quad
\frac{\lambda}{4}+\frac{\delta}{\tau}\leq\frac{\epsilon}{2},
$$
which implies that
\begin{equation}\label{eq42.5}
\tau\geq\max\left\{\frac{\gamma}{(\frac12-\epsilon)\lambda},
\frac{\delta}{(\frac{\epsilon}2-\frac{\lambda}{4})}\right\}.
\end{equation}
To get a tighter lower bound of $\tau$, we choose $\epsilon$ to minimize
the maximum in the right hand side of \eqref{eq42.5}. Since
$\frac{\gamma}{(\frac12-\epsilon)\lambda}$ monotonically increases
and $\frac{\delta}{(\frac{\epsilon}2-\frac{\lambda}{4})}$ monotonically decreases
as $\epsilon$ increases from $\frac{\lambda}{2}$ to $\frac12$, it is easy to
see that the maximum is attained when
$\frac{\gamma}{(\frac12-\epsilon)\lambda}
=\frac{\delta}{(\frac{\epsilon}2-\frac{\lambda}{4})}$, which yields a solution
$\epsilon=\frac{\frac{\delta}{2}+\frac{\gamma}{4}}{\frac{\gamma}{2\lambda}+\delta}$
that always falls in $[\frac{\lambda}{2},\frac12]$ if $\lambda\leq1$.
Substituting it into \eqref{eq42.5}, we obtain an optimal lower bound of $\tau$
$$
\tau\geq\frac{2\gamma+\lambda\delta}{\lambda(1-\lambda)}.
$$
In viewing of $\mathcal{P}_{\Omega^\perp}S_0=0$ and therefore $\mathcal{P}_{\Omega^\perp}(M-S_0)=\mathcal{P}_{\Omega^\perp}D$, we get \eqref{eq40}.

In the above derivation, we have used only assertions (a)(b)(c) in Lemma \ref{lem2}
and (a)(b) in Lemma \ref{lem3}.
The probabilities that these assertions hold true and $\|\mathcal{P}_{\Omega}\mathcal{P}_{\mathrm{T}}\|\leq 1/2$ are shown in the proof
of \cite[Theorem 1.1]{cand4}, under the same assumptions of this theorem.
Therefore, the remaining of the proof
of this theorem follows directly from that of \cite[Theorem 1.1]{cand4}.

\end{proof}

It is easy to see that the lower bound of $\tau$ is a finite number.
However, the exact lower bound is very hard to get, because we can only manipulate the given
data matrix $D$. In the following, we give a method to estimate the
lower bound.
For this, we have to get upper bounds for the norms involved in
\eqref{eq40}. For the first norm, we simply use
$$
\|\mathcal{P}_{\Omega^\perp}D\|_\infty\leq\|D\|_{\infty}.
$$
For the second norm, by using the facts $\mathcal{P}_{\Omega^\perp}M=\mathcal{P}_{\Omega^\perp}D$ and $\|\mathcal{P}_{\Omega}\mathcal{P}_{\mathrm{T}}\|\leq 1/2$, we have
$$\|\mathcal{P}_{\Omega}M\|_F^2=\|\mathcal{P}_{\Omega}\mathcal{P}_{\mathrm{T}}M\|_F^2\leq \frac{1}{4}\|M\|^2_F=\frac{1}{4}(\|\mathcal{P}_{\Omega}M\|_F^2+\|\mathcal{P}_{\Omega^\perp}D\|_F^2),$$
which implies
$$
\|\mathcal{P}_{\Omega}M\|_F\leq \frac{\sqrt3}{3}\|\mathcal{P}_{\Omega^\perp}D\|_F.
$$
Therefore, we obtain
$$
\|\mathcal{P}_{\Omega}(M-S_0)\|_F=\|2\mathcal{P}_{\Omega}M-\mathcal{P}_{\Omega}D\|_F
\leq\|\mathcal{P}_{\Omega}D\|_F+2\|\mathcal{P}_{\Omega}M\|_F\leq\|\mathcal{P}_{\Omega}D\|_F+\frac{2\sqrt{3}}{3}\|\mathcal{P}_{\Omega^\perp}D\|_F.
$$
By using the Cauchy-Schwarz inequality, we finally have
$$
\|\mathcal{P}_{\Omega}(M-S_0)\|_F\leq \sqrt{1^2+\left(\frac{2\sqrt{3}}{3}\right)^2}\cdot\sqrt{\|\mathcal{P}_{\Omega}D\|_F^2+\|\mathcal{P}_{\Omega^\perp}D\|_F^2}=\frac{\sqrt{15}}{3}\|D\|_F.
$$
Therefore, in practice, we can choose
$$
\tau\geq\frac{2\|D\|_\infty+\frac{\lambda\sqrt{15}}{3}\|D\|_F}
{\lambda(1-\lambda)},
$$
 to guarantee
the exact sparse low-rank matrix decomposition by solving the strongly
convex optimization \eqref{eq8}. Similar to Corollary \ref{Cor1} for exact matrix completion, we have an analog lower bound estimate based on the observed data matrix  for exact low-rank and sparse matrices decomposition.
\begin{corollary}
Assume
$$
\tau\geq\frac{2\|D\|_\infty+\frac{\lambda\sqrt{15}}{3}\|D\|_F}
{\lambda(1-\lambda)}.
$$
Then, under the assumptions of Theorem \ref{the2},  $(M,S_0)$ is
the unique solution to the strongly convex programming \eqref{eq8}
with probability at least $1-cn_{(1)}^{-10}$, where $c$ is the
numerical constant in Theorem \ref{the2}.
\end{corollary}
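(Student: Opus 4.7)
The plan is to take the condition of Theorem \ref{the6} as a black box and replace the two norms appearing in its numerator by upper bounds that depend only on the observed data matrix $D$. If the hypothesis of the corollary is strong enough to force the hypothesis of Theorem \ref{the6}, the conclusion (together with the probability estimate) carries over verbatim. So the work is entirely deterministic: bound $\|\mathcal{P}_{\Omega^\perp}D\|_\infty$ and $\|\mathcal{P}_{\Omega}(M-S_0)\|_F$ in terms of $\|D\|_\infty$ and $\|D\|_F$ on the high-probability event on which Theorem \ref{the6} and the side condition $\|\mathcal{P}_{\Omega}\mathcal{P}_{\mathrm{T}}\|\leq 1/2$ both hold.

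For the first quantity there is nothing to do: $\mathcal{P}_{\Omega^\perp}$ is a coordinate projection, so $\|\mathcal{P}_{\Omega^\perp}D\|_\infty \leq \|D\|_\infty$. For the second, I would first use $\mathrm{supp}(S_0)\subset\Omega$ together with $D=M+S_0$ to rewrite $\mathcal{P}_{\Omega}(M-S_0)=2\mathcal{P}_{\Omega}M-\mathcal{P}_{\Omega}D$, so by the triangle inequality
$$
\|\mathcal{P}_{\Omega}(M-S_0)\|_F \;\leq\; \|\mathcal{P}_{\Omega}D\|_F + 2\|\mathcal{P}_{\Omega}M\|_F .
$$
This reduces everything to controlling $\|\mathcal{P}_{\Omega}M\|_F$.

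The central step, and the main obstacle, is that $M$ is not observed. Here I would exploit $M\in\mathrm{T}$ and the bound $\|\mathcal{P}_{\Omega}\mathcal{P}_{\mathrm{T}}\|\leq 1/2$ from Lemma \ref{lem2}. Since $M=\mathcal{P}_{\mathrm{T}}M$, one gets $\|\mathcal{P}_{\Omega}M\|_F^2\leq \tfrac14\|M\|_F^2$; splitting $\|M\|_F^2$ over $\Omega$ and $\Omega^\perp$ and using $\mathcal{P}_{\Omega^\perp}M=\mathcal{P}_{\Omega^\perp}D$ gives
$$
\|\mathcal{P}_{\Omega}M\|_F^2 \;\leq\; \tfrac14\|\mathcal{P}_{\Omega}M\|_F^2 + \tfrac14\|\mathcal{P}_{\Omega^\perp}D\|_F^2,
$$
which rearranges to $\|\mathcal{P}_{\Omega}M\|_F\leq \tfrac{\sqrt{3}}{3}\|\mathcal{P}_{\Omega^\perp}D\|_F$. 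This is the crucial estimate: it converts an unobserved quantity into one that depends only on $D$, at the price of a universal constant.

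Finally, I would feed this back into the triangle inequality bound, getting $\|\mathcal{P}_{\Omega}(M-S_0)\|_F\leq \|\mathcal{P}_{\Omega}D\|_F + \tfrac{2\sqrt{3}}{3}\|\mathcal{P}_{\Omega^\perp}D\|_F$, and then apply Cauchy--Schwarz to the two-term sum on the right together with $\|\mathcal{P}_{\Omega}D\|_F^2+\|\mathcal{P}_{\Omega^\perp}D\|_F^2=\|D\|_F^2$ to replace it by a multiple of $\|D\|_F$. Substituting the two resulting bounds into Theorem \ref{the6}'s threshold yields a lower bound of the form $\bigl(2\|D\|_\infty + c\lambda\|D\|_F\bigr)/\bigl(\lambda(1-\lambda)\bigr)$, which (with $c=\sqrt{15}/3$ as in the preceding display) is exactly the hypothesis of the corollary; the probability bound $1-cn_{(1)}^{-10}$ is the same event on which Theorem \ref{the6} and $\|\mathcal{P}_{\Omega}\mathcal{P}_{\mathrm{T}}\|\leq 1/2$ hold, so no extra probabilistic work is needed.
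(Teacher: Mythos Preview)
Your proposal is correct and follows essentially the same route as the paper: bound $\|\mathcal{P}_{\Omega^\perp}D\|_\infty\le\|D\|_\infty$ trivially, use $M\in\mathrm{T}$ together with $\|\mathcal{P}_{\Omega}\mathcal{P}_{\mathrm{T}}\|\le 1/2$ and $\mathcal{P}_{\Omega^\perp}M=\mathcal{P}_{\Omega^\perp}D$ to obtain $\|\mathcal{P}_{\Omega}M\|_F\le\tfrac{\sqrt{3}}{3}\|\mathcal{P}_{\Omega^\perp}D\|_F$, then feed this into $\mathcal{P}_{\Omega}(M-S_0)=2\mathcal{P}_{\Omega}M-\mathcal{P}_{\Omega}D$ and finish with Cauchy--Schwarz. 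The only slip is attributing the bound $\|\mathcal{P}_{\Omega}\mathcal{P}_{\mathrm{T}}\|\le 1/2$ to Lemma~\ref{lem2}; in the paper it is the hypothesis of Theorem~\ref{the5} and is established on the same high-probability event invoked in Theorem~\ref{the6}.
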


\section{Conclusion}
In this paper, we have shown that \emph{strongly} convex optimizations
can lead to exact low-rank matrix recovery in both the matrix completion
problem and robust principle component analysis under suitable conditions. Explicit lower bounds for
the parameters $\tau$ involved are given. These results are complementary to the results in
\cite{cand1,cand2,cand4}, where convex optimizations are shown to lead to
the exact low-rank matrix recovery.


We would like to point out that the combination of the MC and RPCA
problems, i.e., matrix completion from grossly corrupted data
\cite{cand4} has been modeled as
\begin{eqnarray}\label{eq44}
&\mathrm{minimize:}&~ ~\|L\|_*+\lambda \|S\|_1 \nonumber \\
&\mathrm{subject~ to:~ }&\mathcal{P}_\Omega(L+S)=Y.
\end{eqnarray}
With the technique discussed in this paper, one can modify \eqref{eq44}
to its strongly convex counterpart that can guarantee the exact recovery as well.

Note that all models here assume that the observed data are exact. However,
in any real world application, one only have observed data corrupted
at least by a small amount of noise. Therefore, a possible
direction for further study is to consider the case of noisy data
\cite{cand3} in matrix completion and robust principle component
analysis.


\section*{Acknowledgement}
The work of Hui Zhang has been supported by the Chinese Scholar Council during his visit to Rice University. The work of Lizhi Cheng and Jubo Zhu has been supported by the National Science
Foundation of China (No.61072118 and No. 61002024).

\small{

}
\end{document}